\DeclareMathAlphabet{\pazocal}{OMS}{zplm}{m}{n}
\newcommand{\Oh}{\mathcal{O}}
\newcommand{\inprod}[1]{\left\langle #1 \right\rangle}
\newtheorem{theorem}{Theorem}
\newtheorem{lemma}{Lemma}
\newtheorem{definition}{Definition}
\newcommand{\proofbelow}{3pt}
\newcommand{\afterproof}{\hfill $\blacksquare$ \par \vspace{\proofbelow}}
\renewenvironment{proof}{\noindent\textbf{Proof.}\,}{\afterproof}
\author{Vasileios Nakos\thanks{Harvard University. \texttt{vasileiosnakos@g.harvard.edu}. Supported in part by ONR grant
N00014-15-1-2388}} 
\title{On Fast Decoding of High-Dimensional Signals from One-Bit Measurements}
\begin{document}

\maketitle

\begin{abstract}
In the problem of one-bit compressed sensing, the goal is to find a $\delta$-close estimation of a $k$-sparse vector $x \in \mathbb{R}^n$ given the signs of the entries of $y = \Phi x$, where $\Phi$ is called the measurement matrix. For the one-bit compressed sensing problem, previous work \cite{Plan-robust,support} achieved $\Theta (\delta^{-2} k \log(n/k))$ and $\tilde{ \Oh} ( \frac{1}{\delta} k \log (n/k))$ measurements, respectively, but the decoding time was $\Omega  ( n k \log (n / k ))$. In this paper, using tools and techniques developed in the context of two-stage group testing and streaming algorithms, we contribute towards the direction of very fast decoding time. We give a variety of schemes for the different versions of one-bit compressed sensing, such as the for-each and for-all version, support recovery; all these have $poly(k, \log n)$ decoding time, which is an exponential improvement over previous work, in terms of the dependence of $n$.\\

\end{abstract}

\section{Introduction}

\subsection{Standard Compressed Sensing}

The compressed sensing framework describes how to reconstruct a vector (signal) $x \in \mathbb{R}^n$ given the linear measurements $ y = \Phi x$ where $\Phi \in \mathbb{R}^{m \times n}$ for some $m \ll n$. This is an undetermined system with $n$ variables and $m$ equations. In many applications, however, such as images, we know that the vector $x$ can be approximated by a $k$-sparse vector in some basis. In this case, the matrix $\Phi$ contains a sufficient amount of information to roughly recover $x$ if $m$ is large enough; in particular, as shown in \cite{Tao, CandesTao}, the signal can be reconstructed exactly from $\Theta(k \log (n/k)) $ measurements when $\Phi$ is a Gaussian matrix. In order to do this, however, one has to solve the non-convex program \[ \mathop{min} \|x\|_0\text{ s.t. }y = \Phi x\]

Fortunately, \cite{chen2001atomic, Tao} shows that we can use Basis Pursuit (BP), which changes the objective to $ \mathop{min} \|x\|_1$, and still recover a descent approximation of $x$. This can be solved using linear programming.   

Compressed sensing, or sparse recovery, has appeared to be a very useful tool in many areas such as analog-to-digital conversion \cite{Baraniuk}, threshold group testing \cite{dam}, Discrete Signal Processing \cite{Donoho}, streaming algorithms \cite{mut} and bioinformatics \cite{bio}. Depending on the application, different parameters are needed to optimize (measurements, decoding time, encoding time, failure probability). 

Often we discriminate between the for-all model (or universal recovery) and the for-each model( non-universal recovery). In the for-all model, a single matrix is picked, which allows reconstruction of all $k$-sparse vectors, whereas in the for-each model the measurements are chosen at random such that, for some error probability $p$, they will contain plenty of information to reconstruct a single vector $x$ with probability at least $1-p$. We note that all aforementioned papers refer to the for-all model.

Moreover, it is desirable to achieve {\em sublinear } decoding time. The state of the art for the for-each model is \cite{fastforeach}; it achieves $k \cdot poly(\log n)$ decoding time with $\Theta(k \log (n/k))$ measurements. The failure probability was improved later in \cite{foreachlowrisk} using a much more complicated scheme. In the for-all model, Porat and Strauss devised a scheme with $\Oh( k \log (n/k) )$ measurements accompanied with the first sublinear decoding procedure running in time $\Oh ( k^{1- \alpha} n ^{\alpha})$, for any constant $\alpha$ \cite{porat2012sublinear}. Later, in \cite{Porat2}, the authors manage to bring the dependence of the approximation $\epsilon$ fact down to the right order of $\epsilon^{-1}$ and achieve runtime $poly(k,\log n)$, when $\epsilon \leq (\frac{\log k}{\log n})^{\gamma}$, for any constant $\gamma$.

\subsection{One-Bit Compressed Sensing} 

In applications many times compressed sensing measurements must be quantized, since the requirement of infinite precision is not realistic: any measurement must be mapped to a small finite value in some universe. In hardware implementations, for example, where quantizers are implemented using comparators to zero \cite{first}, there is need of quantization to one-bit measurements. Comparators are indeed fast, but they are expensive, so there minimizing their usage is really imporant. Moreover, dynamic range issues are a smaller problem in the case of one-bit quantizers. Details and motivation can be found in \cite{first}.

It is clear that quantization increases the complexity of the decoding procedure and, additionally, is irreversible: given $y = \mathrm{sign}(\Phi x)$ it is impossible to get the exact vector back. Previous results inquired the case in which the quantization maps each coordinate to  $\{-1,+1\}$, which means that we learn only the sign of each coordinate. First, it is not obvious whether there is sufficient information to reconstruct a signal given its one-bit measurements. Of course, since we cannot know the length of the signal, nor the exact signal (even if its length were given), the following question remains: assuming that the length of the signal is $1$, can we find another signal that it is close to it in the $\ell_2$ norm?

The problem was first studied in the work of Boufounos and Baraniuk \cite{first}, where the authors suggest recovering the signal $x$ by solving the optimization problem \[ min_x \|x \|_1 \text{ s.t.: } y \odot Ax \geq 0, \|x\|_2 =1, \] where $\odot$ stands for the element-wise product between two vectors. The goal is to find a vector $y$ on the unit sphere such that $\|y - \frac{x}{\|x\|_2} \|_2^2 \leq \delta$. It is clear that this relaxation requires solving a non-convex program, something which Laska et al. \cite{laska} tried to remedy by giving  an optimization algorithm that finds a stationary point of the aforementioned program; both papers, however, do not provide provable guarantees for the number of measurements needed. An alternative formulation was studied in \cite{jacques}, which showed that the number of measurements could be  brought down to $\Oh (\delta^{-1} k \log n)$, but the main obstacle of the non-convex formulation remained. In \cite{plan2013one} Vershyin and Plan gave the first computationally tractable algorithm for the problem of one-bit compressed sensing by designing a compressed sensing scheme that approximately recovers a $k$-sparse vector from $\Oh (\delta^{-5} k \log^2(\frac{n}{k}))$ one-bit measurements via a linear programming relaxation. Their techniques were based on random hyperplane tessellations; the main geometric lemma they needed was that $\Oh (k \log(n/k))$ random hyperplanes partition the set of $k$-sparse vectors with unit norm into cells, each one having small diameter. In \cite{Plan-robust} the same authors improved the number of measurements to $\Oh (\delta^{-2} k \log(\frac{n}{k}) )$  by analyzing a simple convex program. Their results can also be generalized to other sparsity structures, where the crucial quantity that determines the number of measurements  is the gaussian mean-width of the set of all unit vectors having a specific sparsity pattern. Last but not least, they manage to handle gaussian noise and, most importantly, adversarial bit flips, though with a small worsening in the dependence on $\delta$ in their number of measurements. In \cite{support} a two-stage algorithm with $\tilde{\Oh} ( \frac{1}{\delta} k \log (n /k) )$ measurements and $\Oh (nk \log (n/k) + \frac{1}{\delta^5} (k \log (n/k))^5)$ decoding time was proposed. Apart from recovering the vector, other algorithms that recover only the support of the signal have been proposed; see for example \cite{support, gupta}.

In \cite{goyal1998quantized} it is suggested that even if the support of the vector is known, the dependence of the number of measurements on $\delta$ must be at least $\frac{1}{\delta}$. In order to circumvent this, alternative quantization schemes were proposed, with the most common being Sigma-Delta quantization \cite{sigmadelta1, sigmadelta2}. In	\cite{baraniuk2014exponential} Baraniuk, Foucart, Needell, Plan and Wooters  manage to bring the dependence on $\delta$ down to $\log ( \frac{1}{\delta})$ if the quantizer is allowed to be adaptive and the measurements take a special form of threshold signs.

\subsection{Group Testing}

In the group testing problem, we have a large population, which consists of ``items'', with a known number of defectives. The goal is to find the defectives using as few tests as possible, where a test is just a query whether a certain subset of items contains at least one defective. The group testing problem was first studied by Dorfman in \cite{dorfman1943}. There are two types of algorithms for this
problem, namely adaptive and non-adaptive. In the first case, the outcome of
previous tests can be used to determine future tests, whereas in non-adaptive algorithms all tests
are performed at the same time. Group testing has many applications in DNA library screening and detection of patterns in data; more can be found in \cite{pooling}, \cite{angle}. 

Any solution for the group testing problem corresponds to a binary matrix, where the number of rows equals the number of tests and the number of columns equals the cardinality of the population. Given such a matrix $M$ and a vector $x$ indicating the positions of the defectives, we should be able to identify $x$ from $Mx$ , where the addition here corresponds to the addition operation of Boolean algebra.  Since decoding time is important, the brute-force algorithm that iterates over each possible subset in order to recognise the defective set does not suffice. However, one can design matrices such that the naive decoding algorithm, which eliminates items belonging to negative tests and returns all the other items, correctly identifies all defective items \cite{du1999combinatorial}. In literature these matrices are known as $k$-disjunct matrices.

In this paper, we are also interested in the so-called two-stage group testing problem, where two stages are allowed: the first stage recognises a superset of the defectives, and the second stage, which is performed after seeing the results of the first stage, recognizes the exact set of the defectives by querying separately for each one. We refer to $(k,l)$ two-stage group testing as the case when there are $k$ defectives and the superset is allowed to have up to $k+l$ elements. In fact, this is equivalent to the existence of a matrix $M$ such that given $Mx$ one can find a set $S$ with $k+l$ elements such that all defectives are included in $S$. The same naive algorithm, which eliminates all items that belong to a negative test and returns all other items, will be used here. A matrix is called list-disjunct if this algorithm finds a superset of the support with at most $k+l$ elements. The term `list-disjunct' appeared in \cite{indyk}, although it was also studied before in \cite{de2005optimal}, under the name of super-imposed codes, and in \cite{rashad1990random}, under the name of list-decoding super-imposed codes. In \cite{HPA} Ngo, Porat and Rudra give efficient and strongly explicit constructions of matrices that allow two-stage group testing, which are also error-tolerant, in the sense that they can correct $e_0$ false positives and $e_1$ false negatives in sub-linear time and additional $\Theta( e_0 + k\cdot e_1)$ tests. They also prove matching lower bounds for several cases, including the case that $k = \Theta(l)$.

A group testing scheme is a tuple $(M,R)$, where $M$ is a matrix in $\{0,1\}^{m \times n}$ and $R$ a procedure that takes as input $Mx$ and outputs a vector $y$. Depending on the guaratee we want, we will either refer to it as two-stage group testing or (one-stage) group testing. The worst-case running time of the procedure $R$ corresponds to the decoding time of the scheme.

\subsection{Our Results} 

The main goal of our work is to understand under which conditions and which number of measurements sublinear decoding time is possible. The paper is divided into two parts. The first part investigates the for-each version of noisy one-bit compressed sensing and give a scheme that not only is exponentially faster than what is known in the literature, but also outperforms, for some setting of parameters, previous work \cite{Plan-robust,support}. In the worst case, our results have a small overhead in the number of measurements. The second part focuses on fast decoding of noiseless vectors. We first give a near optimal scheme with sublinear decoding time, by connecting the problem with Combinatorial Group Testing. Second, we try to understand how it is possible to achieve a for-all guarantee for one-bit compressed sensing, while still keeping sublinear decoding time. Our techniques also give a scheme for support recovery that outperforms the one in \cite{support}, being exponentially faster; one additional aspect of our scheme, is that it is also computable in polynomial time. 

The $\delta-\ell_2/ \ell_2$ guarantee for one-bit compressed sensing is defined as follows: For a unit vector $x \in \mathbb{R}^n$ we say that a scheme satisfies the $\delta-\ell_2/ \ell_2$ guarantee for one-bit compressed sensing, if the output satisfies \[ \| \hat{x} - x \|_2^2 \leq c \|x_{tail(k)}\| _2^2 + \delta, \]
while $x_{tail(k)}$ is the vector that occurs after zeroing out the biggest $k$ coordinates of $x$ in magnitude and $c$ is some absolute constant.\\
In the support recovery problem, one wants to construct a matrix $\Phi$, such that for all $k$-sparse $x$, one is able to recover the support of the vector $x$, given measurements $y = \mathrm{sign}(\Phi x)$.

We present the results that we have in greater detail. We note that the decoding time of each scheme is $poly(k, \log n)$.
 
\begin{itemize}
\item $\delta-\ell_2/ \ell_2$ for-each one-bit Compressed sensing from $\Oh(k \log n \cdot (\log k + \log \log n) + \delta^{-2}  k )$ measurements.   
\item For-each one-bit Compressed Sensing (noiseless signals) from $\Oh(k \log n + \log_k n \cdot \log \log_k n + \delta^{-2}k )$ measurements. This extends the result of \cite{Plan-robust}, as it manages to also decrease the number of measurements for the for-each version of the problem. We note that for $k = \Omega( \frac{\log n}{\log \log n})$, our result uses less measurements than the one presented in \cite{Plan-robust}. 
\item  For-all one-bit Compressed Sensing (noiseless signals) in $\Oh(k^2 \log n \log \log_k n + \delta^{-2} k \log n)$ measurements. This is the first scheme that allows sublinear decoding time in the for-all model, although the dependence on $k$ is $k^2$.
\item Support recovery from one-bit measurements (noiseless signals) in $\Oh(k^3 \log n )$ measuremnnts. This scheme is not only exponentially faster than the one presented in \cite{support}, but also explicit, in the sense that the matrix can be computed in polynomial time in $n$. in the number of measurements.

An interesting aspect of our results is that in the for-each model, the $\delta$ factor does not need to multiply the $k \log n$ factor, in contrast to the for-all version. 

\end{itemize}

\begin{center}
\begin{table}

 \centering
 \begin{tabular}{| c c c c c |} 
 \hline
 Algorithm & Measurements & Decoding-Time & Model & Noise\\ [0.5ex] 
 \hline\hline
\hline 
\cite{Plan-robust} & $\Oh(\delta ^{-6} k \log (n/k)) $ &  $poly(n)$ & For-all & Type 1\\
\hline
\cite{Plan-robust} & $\Oh(\delta^{-2}k \log (n/k) $ & $poly(n)$ & For-all & No\\
 \hline
  \cite{Plan-robust} & $\Oh(\delta^{-2} k \log (n/k))$ & $poly(n)$ & For-each & Type 2 \\
 \hline
 \cite{support} & $\tilde{\Oh}(\delta^{-1} k \log (n/k))$ & $\Oh(nk \log n) + poly(k, \log n)$ & For-all & No \\
\hline
This paper & $\Oh(k \log n (\log k + \log \log n) + \delta^{-2}k)$ & $poly(k,\log n)$ & For-each & Type 3\\
\hline
 This paper & $\Oh(k \log n + \delta^{-2} k + \log _k n \log \log_k n )$ & $poly(k, \log n) $& For-each & No\\
 \hline
 This paper & $\Oh(k^2 \log n \log \log_k n + \delta^{-2} k \log (n/k) ) $ & $poly(k,\log n) $ & For-all & No \\
 \hline
\end{tabular}
 \caption{Comparison of recovery schemes for one-bit compressed sensing }
\end{table}
\end{center}

We explain the three types of noise appeared in the previous table:

\begin{itemize}

\item Type 1 stands for adversarial bit flips. This means that after receiving $y = \mathrm{sign}(Ax)$, an adversary can flip some of the entries of $y$, and then give it to the decoder. Here, we assume that $x$ is exactly $k$-sparse.
\item Type 2 noise stands for gaussian random noise that is added to the $k$-sparse vector $x$ after the matrix $A$ has been applied to it. This means that $y = \mathrm{sign}(Ax + u)$, where $u \sim \mathcal{N} (0,I)$.
\item Type 3 noise refers to general vectors and is handled by the $\delta-\ell_2/\ell_2$ guarantee. This means that $y = \mathrm{sign}(A(x_{head(k)}+x_{tail(k)}))$, where we can view the term $x_{tail(k)}$ as pre-measurement adversarial noise. 
\end{itemize}

\begin{center}
\begin{table}

 \centering
 \begin{tabular}{| c c c c|} 
 \hline
 Algorithm & Measurements & Decoding time & Model  \\ [0.5ex] 
 \hline\hline
 \hline
  \cite{support} & $\Oh(k^3 \log n )$ & $\Oh(nk \log n)$ & For-all \\
  \hline
 This paper & $\Oh(k^3 \log n  \ )$ & $\Oh(k^3 poly(\log n))$ & For-all \\
 \hline
\end{tabular}
 \caption{Comparison of schemes for support recovery}
\end{table}
\end{center}

The ideas that are used in this paper to get sublinear decoding time are based on ideas that appeared in \cite{HPA}, as well as the dyadic trick, which has appeared in the streaming literature in the context of the Count-Min Sketch \cite{cormode2005improved}. As far as we know, our work is the first that looks at sublinear decoding time in the one-bit compressed sensing framework and even achieves less measurements in some cases, contributions that could be regarded as considerable improvements or additions over previous works. We believe that a strong point of our schemes is their simplicity. 

\section{ Preliminaries}

We define the sign function as $\mathrm{sign}(z) = +1$, for $z \geq 0$ and $\mathrm{sign}(z) = -1$ for $z < 0$. For a vector $x$, we define $\mathrm{sign}(x)_i = \mathrm{sign}(x_i)$, for all $i \in [n]$.

Any one-bit compressed sensing scheme is defined by a pair $(\mathcal{D}, Dec)$ where $\mathcal{D}$ is a distribution over $\mathbb{R}^{m \times n}$ and $Dec$ is an algorithm that takes input $\mathrm{sign}(\Phi x)$ for some $x \in \mathbb{R}^n$ and gives back a vector $\hat{x}$. We will refer to $Dec$ either the ``decoder'' or ``decoding procedure''. We use $m$ to denote the number of measurements, and {\em decoding time} refers to the running time of $Dec$. We also define 
$\Sigma_k = \{ x: \| x \|_0 \leq k, \|x \|_2 \leq 1\} $ to be the set of all $k$-sparse vectors contained in the unit $\ell_2$ ball, and $\Sigma^1_k = \{ x: \|x \|_2 = 1, \|x\|_0 \leq k\}$ the set of unit norm vectors with at most $k$ non-zero coordinates. 

For $x \in \mathbb{R}^n$ we denote its support set by $supp(x)$. For a vector $x$,  $head(k)$ denotes the set of its $k$ largest coordinates in magnitude, while $tail(k)$ denotes the set of its $n-k$ smallest coordinates in magnitude. 

For each $S \subset n$, let $x_S \in \mathbb{R}^{|S|}$ denote the signal $x$ restricted to coordinates in $S$. Similarly, for a matrix $M \in \mathbb{R}^{r \times n}$ and each $S \subset n$ let $M_S \in  \mathbb{R}^{ r \times |S|}$ be the matrix $M$ restricted to columns in $S$.

 Given two matrices $A,B$ we denote the Hadamard or entrywise product by $A \odot B$; so the $(i,j)$ entry of $AB$ is $ A_{ij} \cdot B_{ij}$. The row-direct sum $A \biguplus B \in \mathbb{R}^{(r_1 + r_2) \times n}$ of $A,B$ is defined as the vertical concatenation of $A$ and $B$.\newline

\begin{definition}
We say a scheme satisfies the $\delta-\ell_2 / \ell_2$ guarantee for one-bit compressed sensing  if for each $x \in \mathcal{S}^{n-1}$, it estimates a vector $\hat{x}$ such that 
\[ \|x - \hat{x} \|_2^2 \leq C \|x_{tail(k)}\|_2^2 + \delta,\]

where $C$ is an absolute constant.
\end{definition}

We note again that the additive factor of $\delta$ is necessary here, in contrast to linear compressed sensing.

We also give the definition of the tensor product of two matrices. We note that this is not the standard tensor product (or Kronecker product, as usually known) appearing in the literature.

\begin{definition}
 Let $A \in \{0, 1\}^m \times \{0, 1\}^N$ and $A' \in \{0, 1\}^{m'} \times \{0,1\}^N$. The tensor
product $ A \otimes A'$
is an $m m' \times N$ binary matrix with rows indexed by the elements of $[m] \times [m']$ such
that for $i \in [m]$ and $i' \in [m'
]$, the row of $A \otimes A′$
indexed by $(i, i')$ is the coordinate-wise product
of the $i$-th row of $A$ and $i'$-
th row of $A'$.

\end{definition}

In order to proceed, we have to explain the difference between the for-all and the for-each model. Let $P_x$ be the predicate that the sparse recovery scheme returns a vector $\hat{x}$ such that $\| \frac{x}{\|x\|_2} - \frac{\hat{x}}{\| \hat{x} \|_2} \|_2^2 > \delta$, when the matrix $\Phi$ is chosen from the distribution $\mathcal{D}$. Let $p$ be some target probability. In the for-each model the guarantee is that $ \forall x \in \Sigma_k^1,  \mathbb{P}[ P_x ]  \leq p$. In the for-all model the guarantee is that $ \mathbb{P}[ \exists x \in \Sigma_k^1: P_x ] \leq p$. The randomness of the scheme is over the distribution $\mathcal{D}$.

We should review some folklore definitions from Combinatorial Group Testing theory. One can see \cite{HPA}.
\begin{definition}
A $t \times n$ matrix $M$ is $k$-disjunct if for every set $S \subset [n]$ with  $|S| \leq k, \forall j \notin S, \exists i$ such that $M_{i,j}=1$ but $\forall k \in S, M_{i,k} = 0$. In other words, $supp(M_j) - \cup_{l \in S} supp(M_l) \neq \emptyset$.
\end{definition}

\begin{definition}
A $t \times n$ matrix $M$ is $(k,l)$-disjunct if for every two disjoint sets $S,T \subset [n]$ with $|S| \leq k, |T| \leq l$, there exists a row $i$ such that $\forall j \in T, M_{i,j} = 0$, but $\exists j \in S, M_{i,j} =1 $.
\end{definition}

\begin{algorithm}
\caption{Naive Decoding Algorithm}
\label{alg0}
\begin{algorithmic}

\STATE $S \leftarrow  \emptyset$
\FOR {$i \in [n]$ } 
	\IF { exists no negative test where $i$ participates in} 
		\STATE $S \leftarrow S \cup \{i\}$
	\ENDIF
	\ENDFOR
\STATE Output $S$.
\end{algorithmic}
\end{algorithm}

We will make extensive use of the following two lemmas:

\begin{lemma}
Let $M$ be a $k$-disjunct matrix. Then, given $ y = Mx$, the naive decoding algorithm returns a set $S$ such that $S = supp(x)$, i.e. the naive decoding algorithm correctly finds the support of $x$.
\end{lemma}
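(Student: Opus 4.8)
The plan is to prove the lemma directly from the definition of $k$-disjunctness by showing the two inclusions $supp(x) \subseteq S$ and $S \subseteq supp(x)$, where $S$ is the output of the naive decoding algorithm. Throughout, let $T = supp(x)$ and recall that in the group testing model the arithmetic in $y = Mx$ is Boolean, so that $y_i = 1$ if and only if the $i$-th test contains at least one defective item, i.e.\ $supp(M_i^{\mathrm{row}}) \cap T \neq \emptyset$, and $y_i = 0$ (a \emph{negative} test) otherwise. The algorithm places $j$ into $S$ exactly when $j$ participates in no negative test.

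First I would show $T \subseteq S$. Suppose $j \in T$ is a defective. Then every test $i$ in which $j$ participates (every row $i$ with $M_{i,j} = 1$) contains at least one defective, namely $j$ itself, and so $y_i = 1$; that is, $j$ never participates in a negative test. By the rule of Algorithm~\ref{alg0}, $j$ is added to $S$. Hence every defective survives, which gives $T \subseteq S$. Note that this direction uses nothing about disjunctness; it holds for any matrix and is exactly the statement that the naive algorithm never produces false negatives.

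Next I would show $S \subseteq T$, and this is the direction where $k$-disjunctness is used and where the only real content lies. Take any $j \notin T$; I want to exhibit a negative test in which $j$ participates, which will force the algorithm to exclude $j$ from $S$. Since $|T| \le k$ and $j \notin T$, we may apply the defining property of a $k$-disjunct matrix with the set $T$ playing the role of $S$ in the definition: there exists a row $i$ with $M_{i,j} = 1$ but $M_{i,\ell} = 0$ for all $\ell \in T$. For this row, no defective participates in test $i$, so $supp(M_i^{\mathrm{row}}) \cap T = \emptyset$ and therefore $y_i = 0$: test $i$ is negative, and $j$ participates in it. Consequently $j$ is not added to $S$, so $j \notin S$. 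Taking the contrapositive yields $S \subseteq T$.

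Combining the two inclusions gives $S = T = supp(x)$, as claimed. I do not anticipate a genuine obstacle here, since the argument is a short unwinding of the definitions; the only point requiring care is to invoke the $k$-disjunct property with the support of $x$ as the distinguished set and to keep the Boolean (rather than real-arithmetic) interpretation of $y = Mx$ firmly in mind, so that ``negative test'' is correctly identified with ``a row disjoint from $supp(x)$.''
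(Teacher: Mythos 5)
Your proof is correct: the two-inclusion argument (defectives never hit a negative test, hence $T \subseteq S$; $k$-disjunctness applied with $T = supp(x)$ as the distinguished set produces a negative test isolating each non-defective, hence $S \subseteq T$) is exactly the standard argument for this folklore fact, which the paper itself states without proof. The only implicit hypothesis you rely on, namely $|supp(x)| \leq k$, is indeed the intended reading of the lemma.
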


\begin{lemma}
Let $M$ be a $(k,l)$-disjunct matrix. Then, given $ y = Mx$, the naive decoding algorithm returns a set $S$ such that $supp(x) \subset S$ and $|S| \leq |supp(x)|+l$, i.e. the naive decoding algorithm finds a superset of the support of $x$ with additional $l$ elements.
\end{lemma}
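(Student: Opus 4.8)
The plan is to follow the standard two-step analysis of the naive decoder for list-disjunct matrices, treating $y = Mx$ with Boolean (OR) arithmetic so that the $r$-th measurement equals $1$ exactly when some coordinate of $D := supp(x)$ lies in the support of the $r$-th row of $M$, and equals $0$ (a \emph{negative test}) otherwise. Throughout I assume $|D| \le k$, as in the group-testing convention. First I would establish the containment $D \subseteq S$, which requires no disjunctness and is identical to the easy direction of the preceding ($k$-disjunct) lemma: if $j \in D$, then every row $r$ with $M_{r,j} = 1$ has $y_r = 1$ because $j$ itself participates in that test, so $j$ appears in no negative test and Algorithm~\ref{alg0} places it in $S$.

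The substance is the bound $|S \setminus D| \le l$ on the false positives. First I would characterize the survivors: a non-defective coordinate $f$ is placed in $S$ iff it participates in no negative test, i.e. iff every row that hits $f$ also hits some defective, equivalently $supp(M_f) \subseteq \bigcup_{j \in D} supp(M_j)$. Suppose, for contradiction, that more than $l$ non-defectives survive, and collect $l+1$ of them into a set $T$, which is disjoint from $D$. I would then invoke the defining property of $(k,l)$-disjunctness on the pair consisting of the defective set $D$ (the part of size at most $k$) and the survivor set $T$: it guarantees a row $i$ that misses every coordinate of $D$ yet hits some coordinate $f \in T$. But a row missing all of $D$ is a negative test, and it contains $f$, so the naive decoder would have excluded $f$, contradicting $f \in S$. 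Hence at most $l$ non-defectives survive, and combining with $D \subseteq S$ gives $|S| = |D| + |S \setminus D| \le |supp(x)| + l$, which is the claim.

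The step I expect to be delicate is precisely the application of the disjunctness definition: the naive decoder discards a coordinate exactly when some row \emph{avoids all of $D$ while hitting that coordinate}, so I must feed $D$ and the candidate survivors into the two sides of the definition in exactly the orientation that produces such a ``$D$-avoiding, survivor-hitting'' row, and keep the cardinalities ($|D| \le k$ on one side, one survivor too many beyond $l$ on the other) aligned with the quantifier bounds. Once that pairing is set up correctly the contradiction is immediate, and the remainder is just the bookkeeping of the Boolean-OR semantics already exploited for the $k$-disjunct lemma.
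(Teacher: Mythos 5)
The paper never actually proves this lemma: it is stated as a folklore fact from the group-testing literature (with a pointer to \cite{HPA}), so there is no in-paper argument to compare against. Your proof is the standard one and it is correct in substance: the containment $supp(x) \subseteq S$ is immediate from the Boolean-OR semantics, and the bound on false positives follows by feeding the pair (defective set, $l+1$ survivors) into list-disjunctness to manufacture a negative test that contains a survivor.

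The delicate point you flagged is real and worth making explicit, because it is where the paper itself is sloppy rather than you. The paper's definition of $(k,l)$-disjunctness, read literally, asks for a row that \emph{avoids} the set $T$ of size at most $l$ and \emph{hits} the set $S$ of size at most $k$. That is the reverse orientation from what the decoding argument needs (the avoided set must be the defectives, whose size is bounded by $k$); worse, bounding the \emph{hit} set by $l$ from above would let one take it to be a single surviving non-defective and conclude there are no false positives at all, collapsing the notion to plain $k$-disjunctness and making the ``$+l$'' slack vacuous. Your application --- avoided set $D$ with $|D|\le k$, hit set $T$ with $|T|=l+1$ --- is the orientation used in the standard definition of list-disjunctness in \cite{HPA}, and it is the one under which the stated conclusion $|S|\le |supp(x)|+l$ actually follows (under the exact-size-$l$ convention one gets $l-1$ extra elements rather than $l$; the off-by-one is purely a matter of convention). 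So the argument is sound; just note in writing it up that you are using the standard definition rather than the one transcribed in the paper.
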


Of course, the two different definitions solve a different problem; the latter one solving a more relaxed version of Group Testing than the former. We will refer to the second version as two-stage group testing, whereas we will refer to the first version just as group testing.

\subsection{Overview of techniques}

For the the $\delta-\ell_2/\ell_2$ guarantee, we use the main idea from \cite{charikar2002finding} and show that essentially a variant of the Count Sketch and the dyadic trick \cite{cormode2005improved} can be implemented using only one-bit measurements. Of course, we cannot approximate the values of a vector $x$, but we can find the coordinates which carry a significant fraction of the $\ell_2$ mass (at least $\frac{1}{\sqrt{10k}}\|x_{tail(k)}\|_2$), something which is sufficient for our case, since the algorithm of \cite{Plan-robust} is used later, in order to approximate the vector $x$.\newline
For the noiseless case, as well as the support recovery algorithm, we use techniques and schemes developed in the context of two-stage group testing. We then show how these schemes can be extended to schemes for finding the support or the superset of the support of a vector $x$, given access only to one-bit measurements.\\
All of our algorithms find either the support or a superset of the support of the vector $x$ (noiseless case) or a set containing the largest $\Oh(k)$ in magnitude coordinates (noisy case) and hence the crucial information needed to approximate $x$. Then, by restricting to the set obtained, we show how the algorithm from \cite{Plan-robust} can give us the desired guarantees.

\section{For-each $\delta-\ell_2 / \ell_2$ One-Bit Compressed Sensing}

In this section we give an algorithm that achieves the $\delta-\ell_2/\ell_2$ guarantee. The algorithm is based on a two-stage approach. The first stage identifies the set $S$ of the `heavy' coordinates of the vector $x$; these coordinates carry most of the $\ell_2$ mass of $x$ and hence the crucial information needed to approximate it. The second stage restricts to the columns indexed by $S$ and runs the convex program of \cite{Plan-robust}. We then show that this suffices for the $\delta-\ell_2/\ell_2$ guarantee.

The most important part of the algorithm, that enables sublinear decoding time, is the procedure that finds the set $S$. We turn our attention to the Count-Sketch and the dyadic trick \cite{charikar2002finding, cormode2005improved}, and show that, with only a constant multiplicative increase in the measurement complexity, we can modify it so that it also works with one-bit measurements.

For the first stage of our algorithm, we show that given access only to the signs of the measurements of the Count Sketch suffices to identify the `heavy' coordinates, and hence the set $S$. However, since we are aiming for sublinear decoding time, we also use the idea of the dyadic trick: we keep multiple sketches corresponding to a hierarhical representation of the vector $x$, which allow us to find the heavy hitters faster than looking at all coordinates of $x$ one by one.

 Unfortunately, this stage does not give us the values of the magnitudes in $S$. That is why we keep another sketch, which is essential for approximating the vector $x_S$. This second sketch which corresponds to the second stage, will be a matrix where each entry is a standard random variable.  Once we have $x$, we can consider only the columns of the sketch corresponding to $S$ and run the algorithm of \cite{Plan-robust} there.\newline

In what follows, we assume that $n$ is a power of $2$. Let $C_{-1},C_0,C_1,C_2$ be large enough constants to be defined later. .We also define $L_l^a = [2^{(a-1)l}, \ldots ,2^{a \cdot l -1} ], \Delta = \frac{1}{C_{-1} k \log n}, \Delta' = \log( \frac{1}{\Delta})$, where $C_{-1}$.

The sensing matrix consists of the vertical concatenation of $\log n$ matrices $E_1,E_2,\ldots, E_{ \log n} , \Phi$. The number of rows of each $E_l$ is $C_0 \cdot C_1 \cdot C_2\cdot k \Delta'$ and the number of rows of $\Phi$ is $\Oh( \delta^{-2} k )$. For $1 \leq l \leq \log n$ let $E_l$ be the $l$-th matrix. $E_l$ consists of submatrices $E_l^1,\ldots,E_l^{C_2 \Delta'}$. Now, $E_l^m$ is inspired and it is very similar to the Count-Sketch matrix of \cite{charikar2002finding}: each matrix $E_l^m$ consists of $C_1$ matrices $E_{l,t}^m$  , $t = 1, \ldots ,C_1$. For clearness, we define the $q$-th row of $E_{l,t}^m$ via its dot product with $x$:

\[  \inprod{ e_q^T E_{l,t}^m , x } = \sum_{a=1}^{\frac{n}{2^l}} \delta_{l,t,q,a}^m \cdot  \sigma_{l,t,q,a}^m \sum_{j \in L_l^a} g_{l,j}^m \cdot x_j,   \]

where $g_{l,j}^m  \sim \mathcal{N} (0,1)$ and $\delta_{l,t,q,a}^m$ are Bernoulli random variables with $\mathbb{E} \delta_{l,t,q,a}^m  = \frac{1}{C_0k}$.  Moreover, $\sum_q \delta_{q,t,l,a}^m = 1$.

In other words, every $E_l$ holds a hierarchical separation of $[n]$ into intervals of length $2^l$. Fix now some $m \in [C_2 \Delta']$. Then, in each $E_{l,t}^m$, every interval is hashed to some bucket and the coordinates inside this interval are combined with standard gaussians. Moreover, every interval is assigned a random sign. The intuition is that with constant probability, we do expect the term $\sum_{j \in L_l^a} g_{l,j}^m \cdot x_j$, to behave roughly like the $l_2$ mass of the interval itself. Then, by keeping the same gaussians, we take $C_1$ such hashing schemes (we refresh only the $\sigma$ and $\delta$ variables). For fixed $m$, this corresponds to each level having this variant of Count-Sketch. Then, for each level, we repeat the whole scheme $C_2 \Delta'$ times, for $m =1, \ldots , C_2 \Delta'$. Note now that across $E_l^m$ for different $m,l$ we use new $g$ variables. The reason we have to make this additional repetition, in contrast to the standard dyadic trick, is that we only have sign information and we cannot use fresh gaussians at every measurement, since this would imply uniformity of the signs of the measurements. In other words, we would roughly see half $+1$ and half $-1$; however, we will heavily exploit the fact that some  measurements will all have the same sign. \newline 

The decoding algorithm processes these intervals in decreasing $l$ for $l=\log n, \ldots, 1$  and keeps a list of intervals at each time (the list is denoted by $S_l$ in the pseudocode). In the beginning of each step $l$, every interval is hashed to $C_0 k $ buckets.  Suppose for a moment, that we have the $\ell_2$  mass of each interval and we hash these values, instead, into $C_0 k$ buckets combined with random signs. If an interval contains a node that is `heavy', then we expect that its $\ell_2$ mass dominates the mass that is hashed into the same bucket from other nodes. Thus, the sign of the sum must be determined by the sign of the `heavy' interval. To overcome the fact that we do not have the $\ell_2$ mass of the interval (since we can only make use of linear measurements) we add a standard random variable in front of every node in the inteval, before hashing. We exploit the aforementioned intuition, along with $2$-stability of the Gaussian distribution , to show that we can identify all `heavy' intervals and that we do not introduce a big number of erroneous intervals (intervals that are not `heavy'). We repeat this hashing scheme $C_1$ times with the same gaussians and try to find the intervals whose sign `frequently' agrees or disagrees with the measurement they participate in. We consider them good. Let this whole hashing scheme called $Scheme~1$. Now, we repeat $Scheme~1$ $C_2 \Delta'$ times with completely fresh randomness. We then find the intervals were consider good at least $\frac{2}{3}C_2 \Delta'$ times and add them to a list. At the end of each step $l$, every interval $L_l^i$ that belongs to the list and is recognised as heavy, is substituted by its two sub-intervals $L_{l-1}^{2i-1},L_{l-1}^{2i}$. \newline

We will need the constant $C_{thr}$, which is defined as the maximum constant such that the following holds:\newline
$\mathbb{P}_{ Y \sim \mathcal{N} (0,1) }[ Y^2 > \frac{1}{C_{thr}}] = \frac{1}{10}.$ \newline

\begin{algorithm}                      
\caption{Recovery of Heavy Hitters from One-Bit Measurements}          
\label{alg1}                           
\begin{algorithmic}                    
	\STATE $l \leftarrow \log n -1$	
	\STATE $S_{\log n} \leftarrow \{ [n]\}$
	\WHILE {  $l > 0 $   } 
	\FOR {each $L_{l+1}^{i}$ in $S_{l+1}$} 
	\STATE $S_{l} \leftarrow L_{l}^{2i -1} \cup L_l^{2i}$.
	\ENDFOR
	\STATE $H_l \leftarrow \emptyset$
		\FOR{ every element $L_l^a$ in $S_l$ }
			\FOR{ $m=1$ to $C_2 \Delta'$}
			\STATE $good \leftarrow 0$
			\FOR {$t=1$ to $C_1$}
			\STATE Let $q$ be the index of the row of the measurement that $L_l^a$ participates in $E_{l,t}^m$.
					\IF{ $\mathrm{sign}(\inprod{y_q}) = \sigma_{l,t,q,a}^m$}
					\STATE $cnt \leftarrow cnt + 1;$
					\ENDIF
			\ENDFOR
			\IF {$cnt > 0.8 C_1 ~or~cnt < 0.2  C_1$} 
				\STATE $good \leftarrow good+1$
			\ENDIF	
		\ENDFOR
		\IF {$good > \frac{2}{3} C_2 \Delta'$}
			\STATE $H_l \leftarrow H_{l} \cup { L_l^i}$
		\ENDIF
	\ENDFOR
		\STATE $S_{l} \leftarrow H_l $		
    \ENDWHILE
    \STATE Output every $x$ in $S_{1}$.	
\end{algorithmic}
\end{algorithm}

\begin{algorithm}
\caption{Decoding given $y = \mathrm{sign}(Ax)$}
\label{alg2}
\begin{algorithmic}

\STATE $S \leftarrow  Algorithm~2()$.
\STATE  $\hat{x} = argmax \inprod{y,A_S z}$ subject to $\|z\|_2 \leq 1, \|z\|_1 \leq \sqrt{k}$.
\STATE Output $\hat{x}$.
\end{algorithmic}
\end{algorithm}

The main result of this section is the following.

\begin{theorem}
Let $A$ be the sensing matrix that is defined as the vertical concatenation of 
$ E_{1},\ldots,E_{\log n}, \Phi$. Let also $Algorithm~3$ be its associated procedure. Then the scheme $(A,Algorithm~3)$ satisfies the $\delta-\ell_2/\ell_2$ guarantee with probability $1 - \Oh(\frac{1}{k})$. Moreover, $A$ has $\Oh(k \log n (\log k + \log \log n) + \delta^{-2} k)$ rows.
\end{theorem}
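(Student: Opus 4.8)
The plan is to analyze the two stages of the scheme separately: first I would establish that Algorithm~2 recovers, with high probability, a set $S$ of size $\Oh(k)$ that contains every coordinate carrying at least a $\frac{1}{\sqrt{10k}}\|x_{tail(k)}\|_2$ fraction of the $\ell_2$ mass; then I would invoke the analysis of \cite{Plan-robust} on the restricted instance $A_S$ to conclude the $\delta-\ell_2/\ell_2$ guarantee. The measurement count then follows by simply summing the rows of the $E_l$ and of $\Phi$.

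For the correctness of the heavy-hitter stage, the key object is a single interval $L_l^a$ hashed into one of the $C_0 k$ buckets of some $E_{l,t}^m$. I would first use $2$-stability of the Gaussian: the quantity $\sum_{j \in L_l^a} g_{l,j}^m x_j$ is distributed as $\|x_{L_l^a}\|_2 \cdot Z$ with $Z \sim \mathcal{N}(0,1)$, so the signed contribution of an interval to its bucket behaves like its $\ell_2$ mass times a standard Gaussian scaled by the random sign $\sigma_{l,t,q,a}^m$. The core lemma to prove is that if an interval is \emph{heavy} (its mass exceeds the threshold determined by $C_{thr}$), then with constant probability over the hashing of the \emph{other} intervals into the same bucket, the sign of the measurement $\mathrm{sign}(y_q)$ agrees with $\sigma_{l,t,q,a}^m$; whereas for a \emph{light} interval the sign is essentially an unbiased coin. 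This is exactly where $C_{thr}$ enters, guaranteeing that with probability $\frac{1}{10}$ the heavy interval's Gaussian component is large enough (squared mass exceeding $\frac{1}{C_{thr}}$) to dominate the combined mass of the at most $k$ other intervals landing in the same bucket, the latter being controllable because each interval hashes to a given bucket with probability $\frac{1}{C_0 k}$ and the tail mass is spread over $\Omega(k)$ buckets. The inner loop over $t = 1, \ldots, C_1$ repeats the hashing with fresh signs $\sigma$ and $\delta$ but the \emph{same} gaussians, and a Chernoff bound over these $C_1$ trials separates the two cases: a heavy interval's agreement count lands outside $[0.2 C_1, 0.8 C_1]$ (it is consistently biased toward agreement or, after a sign flip of the dominating Gaussian, toward disagreement), while a light interval concentrates near $\frac{1}{2} C_1$. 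The outer loop over $m = 1, \ldots, C_2 \Delta'$ with completely fresh gaussians, followed by the majority vote at the $\frac{2}{3} C_2 \Delta'$ threshold, boosts the per-repetition constant success probability to failure probability polynomially small in $n$, so that a union bound over all $\Oh(k \log n)$ intervals examined across all $\log n$ levels succeeds with probability $1 - \Oh(1/k)$; the choice of $\Delta' = \log(1/\Delta) = \Theta(\log(k \log n))$ is what supplies the needed number of independent repetitions.

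I expect the main obstacle to be precisely the control of the \emph{light} intervals and the propagation of the list $S_l$ down the dyadic tree. Two things must be handled carefully. First, I must show that the number of intervals erroneously declared heavy at each level stays $\Oh(k)$, so that the lists $S_l$ never blow up; this requires bounding, in expectation and with concentration, how many light intervals can fool the majority vote, using that an interval can only look heavy if enough tail mass collides with it, and the total tail mass budget is fixed. Second, the dyadic recursion only works if every interval containing a truly heavy coordinate survives at every level above it, so I need the heavy-detection guarantee to hold simultaneously for all $\Oh(k)$ ancestors of each heavy coordinate, which is where the polynomially small per-interval failure probability is spent. Once $S$ is shown to contain all coordinates of $x_{head(k)}$ that exceed the mass threshold and to have size $\Oh(k)$, the final step is routine: the coordinates omitted contribute at most $\Oh(\|x_{tail(k)}\|_2^2)$ to the error, and running the program of \cite{Plan-robust} on the $\Oh(\delta^{-2} k)$-row matrix $\Phi$ restricted to $S$ yields $\|x - \hat{x}\|_2^2 \leq C\|x_{tail(k)}\|_2^2 + \delta$, completing the $\delta-\ell_2/\ell_2$ guarantee.
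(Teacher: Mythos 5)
Your proposal follows essentially the same route as the paper: a Count-Sketch/dyadic-trick analysis ($2$-stability of the Gaussian, the $C_{thr}$ threshold, a Chernoff bound over the $C_1$ inner repetitions with fresh $\sigma,\delta$ but fixed gaussians, a majority vote over the $C_2\Delta'$ outer repetitions, and a union bound over the $\Oh(k\log n)$ intervals examined) to produce a set $S$ of size $\Oh(k)$ containing every coordinate with $|x_i|^2 \geq \frac{1}{10k}\|x_{tail(k)}\|_2^2$, followed by the convex program of \cite{Plan-robust} restricted to $S$ with $\Phi x_{[n]\setminus S}$ absorbed as Gaussian noise of variance at most $1$. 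The only cosmetic difference is in how the list size is controlled: the paper rejects any ``type-2'' interval having at least $ck$ intervals of larger mass at its level (so only the top $\Oh(k)$ can survive the vote), whereas you sketch a mass-collision budget argument; both are standard and yield the same $|S_l| = \Oh(k)$ bound.
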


For the proof of correctness of our algorithm, we need the following definition:

\begin{definition}
For a coordinate $i$ and a level $l$, let $b^l(i)$ be such that $i \in L_l^{b^l(i)}$. If the level $l$ is implicit, we may simplify the notation to $b(i)$.
\end{definition}

When we say that an interval $I$ is considered good by some matrix $E_{l}^m$ with constant probability, we mean that in our pseudocode the variable $\mathrm{cnt}$ is going to be increased with constant probability when we are considering the interval $I$.
We now move with the main lemmas, $Lemma~3$ and $Lemma~4$.

\begin{lemma}
Fix $m,l$. Let also a coordinate $i$ such that $|x_i|^2 > \frac{1}{10k} \|x_{tail(k)}\|_2^2$. Let also be a interval $I$ at the same level. Assume that $L_l^{b(i)} , I \in S_l$. Then, for some absolute constant $c$, the following claims hold:
\begin{itemize}

\item The interval $L_l^{b(i)}$ will be considered good by $E_l^m$ with constant probability.

\item If there are at least $ck$ intervals at the same level $l$ which have greater $\ell_2$ mass than $I$, then, with constant probabiliy, $I$ is not going to be considered good by $E_l^m$.
\end{itemize}
\end{lemma}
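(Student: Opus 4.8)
The plan is to reduce everything to the behaviour of a single Gaussian versus a Gaussian ``noise'' term, exploiting the $2$-stability that the paper already invokes. Fix $m,l$ and write, for each interval $a$ at level $l$, the signed mass $Z_a = \sum_{j \in L_l^a} g_{l,j}^m x_j$, which by $2$-stability is distributed as $\mathcal{N}(0,\|x_{L_l^a}\|_2^2)$, with the $Z_a$ independent across $a$ and, crucially, \emph{fixed} across the $C_1$ instances $t$ (the Gaussians $g^m$ are shared; only the hashing $\delta$ and the signs $\sigma$ are refreshed). In instance $t$, the measurement bucket holding $a$ equals $\sigma^{(t)}_a Z_a + N_t$, where $N_t=\sum_{a'\neq a,\,a'\to q}\sigma^{(t)}_{a'}Z_{a'}$ is, conditioned on the (fresh) hashing, an $\mathcal{N}(0,V_t)$ variable with $V_t=\sum_{a'\neq a,\,a'\to q}\|x_{L_l^{a'}}\|_2^2$, and is independent of $\sigma^{(t)}_a$. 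The single computation that drives both claims is that the observed sign agrees with $\sigma^{(t)}_a$ with probability $\Pr_{G\sim\mathcal{N}(0,V_t)}[\,Z_a+G>0\,]$: this is close to $1$ when $Z_a\gg\sqrt{V_t}$, close to $0$ when $Z_a\ll-\sqrt{V_t}$, and near $\tfrac12$ when $|Z_a|\lesssim\sqrt{V_t}$. Since conditioning on $g^m$ makes the $C_1$ instances independent, I can apply a Hoeffding bound to the agreement count $\mathrm{cnt}$ throughout.

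For the first claim, set $J=L_l^{b(i)}$ and $W=\|x_J\|_2^2\ge |x_i|^2>\tfrac{1}{10k}\|x_{tail(k)}\|_2^2$. I would first condition on the event $|Z_J|\ge\sqrt{W/C_{thr}}$, which by the very definition of $C_{thr}$ has probability $\tfrac{1}{10}$. I then bound the per-instance noise: the head of $x$ meets at most $k$ intervals, and with $C_0k$ buckets each such interval avoids $J$'s bucket, so none of them collides with $J$ with probability $\ge 1-1/C_0$; conditioned on that, the bucket noise is tail-only with $\E[V_t]\le \tfrac{1}{C_0 k}\|x_{tail(k)}\|_2^2\le \tfrac{10}{C_0}W$, hence by Markov $V_t\ll W/C_{thr}$ for almost all instances. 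For $C_0$ large this forces the per-instance agreement probability above $0.85$ when $Z_J>0$ (resp. below $0.15$ when $Z_J<0$) in expectation over the hashing, and Hoeffding over the $C_1$ instances pushes $\mathrm{cnt}$ past $0.8C_1$ (resp. below $0.2C_1$) except with probability $e^{-\Omega(C_1)}$. Thus $J$ is good with probability at least $\tfrac{1}{10}\bigl(1-e^{-\Omega(C_1)}\bigr)$, a constant.

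For the second claim I run the same machinery in reverse. I condition on $|Z_I|\le\beta\sqrt{\|x_I\|_2^2}$ for a small constant $\beta$, which is a constant-probability event. Because there are at least $ck$ intervals heavier than $I$ and only $C_0k$ buckets, at least one of them lands in $I$'s bucket with probability $1-(1-\tfrac{1}{C_0 k})^{ck}\ge 1-e^{-c/C_0}\ge 0.9$, once $c$ is chosen large relative to $C_0$; such a collision guarantees $V_t>\|x_I\|_2^2$, so the per-instance agreement probability lies in $[\Pr[\mathcal{N}(0,1)<-\beta],\,\Pr[\mathcal{N}(0,1)<\beta]]$, bounded away from both $0$ and $1$. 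Averaged over the hashing the agreement probability sits in a window strictly inside $(0.2,0.8)$ (tune $\beta$ so that $\Pr[\mathcal{N}(0,1)<\beta]$ clears $0.8$ with room), and Hoeffding over the $C_1$ instances keeps $\mathrm{cnt}\in[0.2C_1,0.8C_1]$, i.e. $I$ is not good, with constant probability.

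The step I expect to be the main obstacle is the noise control in the first claim, namely preventing the few but possibly enormous head intervals from swamping $V_t$: the whole argument rests on the head occupying at most $k$ intervals, so that with $C_0k$ buckets a fixed target bucket avoids all of them with probability near $1$, after which only the tail mass (bounded by $\|x_{tail(k)}\|_2^2$, and therefore by $10kW$) contributes. A related subtlety is that the $C_1$ instances are \emph{not} independent, since they reuse the Gaussians $g^m$; all concentration must therefore be performed conditionally on $g^m$, using that only the hashings and signs are refreshed. Finally, the constants $C_0,C_1,c,\beta$ must be chosen jointly so that the conditional agreement probabilities clear the $0.8/0.2$ thresholds (from the correct side in each claim) by a margin large enough for the Hoeffding bound over $C_1$ trials to absorb.
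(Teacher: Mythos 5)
Your proposal follows essentially the same route as the paper's proof: $2$-stability to turn the target interval's contribution into a single Gaussian $Z_J$ of variance $\|x_J\|_2^2$, isolation of the at most $k$ head intervals via the $C_0k$-bucket hashing, a Markov bound on the tail noise entering the bucket, per-instance sign domination, and then aggregation over the $C_1$ repetitions; you are in fact more explicit than the paper about the two places it is silent, namely that the $g^m$ are shared across the $C_1$ instances (so concentration must be done conditionally on $g^m$, with the noise second moment $\tfrac{1}{C_0k}\sum_{a'}Z_{a'}^2$ controlled separately over the Gaussian randomness) and that an explicit Hoeffding step is needed to push $\mathrm{cnt}$ past the $0.8C_1/0.2C_1$ thresholds. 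The one caveat worth recording: by taking the paper's definition of $C_{thr}$ literally, your conditioning event $|Z_J|\ge\sqrt{W/C_{thr}}$ has probability only $\tfrac{1}{10}$, so your first bullet is established with probability about $\tfrac{1}{10}$ (and your second with a small $\beta$-dependent constant). This satisfies the lemma as stated, but the proof of the subsequent lemma invokes this one with success probability strictly greater than $\tfrac12$ (indeed the $\tfrac23 C_2\Delta'$ majority vote needs more), so the intended reading is $\Pr[Y^2>1/C_{thr}]=\tfrac{9}{10}$ (as the paper's own proof uses); with that correction, and with your $\beta$-conditioning in the second bullet replaced or supplemented by an argument that works for typical $|Z_I|$, your constants would also support the downstream amplification.
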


\begin{proof}

Fix some measurement $q$, where $L_l^{b(i)}$ participates to. Define the random variable $Y = \sum_{j \in L_l^{b(i)}} g_{l,j}^m x_j$. By $2$-stability of the gaussian distribution we know that
 $Y \sim \mathcal{N} (0, \|x_{L_l^{b(i)}} \|_2)$. This implies that

\[ \mathbb{P} [ Y^2 > \frac{1}{C_{thr}} \sum_{j \in L_l^{b(i)}}  x_j^2 ] = \frac{9}{10} \] 

We now look at the `noise' coming to the measurement $q$. Consider the at most $k$ intervals at level $l$ which contain the largest $k$ in magnitude coordinates of $x$. Since the expected number of these (at most) $k$ intervals participating in $q$ is $\frac{1}{C_0}$, for suitable choice of $C_0$, with probability $9/10$, none of the $k$ heaviest coordinates of $x$ will participate in measurement $q$. From now one, we condition on this event. For the measurement $q$ define $W_q = \sum_{a \in [\frac{n}{2^l}]- \{b(i)\}} \delta_{l,t,q,a}^m \cdot  \sum_{j \in L_l^a}  x_j^2$. Observe now that 

\[ \mathbb{E}[W_q]  = \frac{\|x_{tail(k)}\|_2^2}{C_0 k}. \] 

Now, we will argue that for all $C_1$ different $q$ where $L_l^{b(i)}$ participates in, the expected squared noise will be close to  $\mathbb{E}[W_q]$. Indeed, consider the $C_1$ different values 
\[ \sum_{a \in [\frac{n}{2^l}]-\{b(i)\}} \delta_{l,t,q,a}^m \cdot  \sigma_{l,t,q,a}^m \sum_{j \in L_l^a} g_{l,j}^m \cdot x_j, \]

for all different $q$.

Then, with probability at least $\frac{9}{10}$, $max_q \{ \frac{1}{W_q}| \sum_{a \in [\frac{n}{2^l}]- \{b(i)\}} \delta_{l,t,q,a}^m \cdot  \sigma_{l,t,q,a}^m \sum_{j \in L_l^a} g_{l,j}^m \cdot x_j| \} < C_5$, where $C_5$ is an absolute constant. Hence, for any $q$,  the expected squared `noise' $ \mathbb{E}[ (\sum_{a- \in [\frac{n}{2^l}] -\{ b(i)\}} \delta_{l,t,q,a}^m \cdot  \sigma_{l,t,q,a}^m \sum_{j \in L_l^a} g_{l,j}^m \cdot x_j)^2]$ will be at most $\frac{C_5}{C_0 k} \|x_{tail(k)} \|_2^2$.

 Let us fix again $q$ now. By Markov's inequality, the squared noise in $q$ will exceed $\frac{10 C_5}{C_0 k} \|x_{tail(k)}\|_2^2$ with probability at most $\frac{1}{10}$. On the other side, $Y$ will be at least $ \frac{1}{10 k \cdot C_{thr}} \|x_{tail(k)}\|_2^2$. This means that if $C_0$ is sufficiently large, then $\frac{10 C_5}{C_0} < \frac{1}{10 C_{thr}}$ and the mass in $q$ will be dominated by the mass of $x_{L_{b(i)}^l}$, with probability $1 - ( \frac{1}{10} + \frac{1}{10} + \frac{1}{10}) = 0.7$. Hence, with probability at least $p_1 + (1-p_1)\frac{1}{2}= 0.85$, the sign of the measurement will agree with $\sigma_{L_l^{b(i)}}$ or will disagree with it. \newline

We now turn our attention to the second bullet of the lemma. Let again $q$ be the row of a measurement of which $I$ participates in. For large enough $c$, with probability at least $\frac{1}{10}$, we will have $ C_5 C_{thr}'$ elements with $\ell_2$ mass bigger than $\|x_{I}\|_2$ that will participate to this measurement. This means that the squared $\ell_2$ mass of this measurement will be at least $C_5 C_{thr}'$ times more than $\|x_{I}\|_2^2$. Again, by the same argument as before, the `noise' will dominate the measurement of this measurement, with probability at least $\frac{9}{10}$. Hence, with probability at least $1-p_2$, where $p_2 = \frac{1}{10} + \frac{1}{10}$ the sign of the measurement will be a either $+1$ or $-1$ with the same probability. This means $1-p_2$ of the time we expect the sign of $y_{\rho}$ to be uniformly at random.

Given the above two claims, we see that, using $E_{l,t}^m$  we can recognise either of the next two cases with constant probability: 
if an interval contains an element $i$ such that $|x_i|_2^2 > \frac{1}{10k} \|x_{tail(k)}\|_2^2$ or if there are at least $ck$ intervals at the same level with larger mass. This completes the proof of the lemma.

\end{proof}

\begin{lemma}
\label{poutsa}
For any $ 1\leq l \leq \log n$, the following holds for the set $S_l$: \\
\begin{itemize}
\item If $ |x_i|^2 > \frac{1}{10k} \|x_{tail(k)}\|_2^2$  and $i \in L_l^{b(i)}$, then $L_l^{b(i)} \in S_l$.
\item $|S_l| = \Oh( k )$.
\end{itemize}
\end{lemma}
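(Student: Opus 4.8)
The plan is to establish both bullets simultaneously by downward induction on $l$, from $l=\log n$ down to $l=1$, treating Lemma 3 as a per-scheme black box and amplifying its constant-probability guarantees into high-probability ones through the $C_2\Delta'$ independent repetitions. The base case $l=\log n$ is immediate: $S_{\log n}=\{[n]\}$ consists of the single interval, which trivially contains every heavy coordinate, and $|S_{\log n}|=1=\Oh(k)$. For the inductive step I assume both bullets hold for $S_{l+1}$ and use that the level-$l$ matrices $E_l^1,\dots,E_l^{C_2\Delta'}$ use fresh gaussians, so the filtering at level $l$ is independent of everything at coarser levels.

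For completeness, fix a coordinate $i$ with $|x_i|^2>\frac{1}{10k}\|x_{tail(k)}\|_2^2$. This condition does not depend on the level, so the inductive hypothesis gives $L_{l+1}^{b^{l+1}(i)}\in S_{l+1}$; since the dyadic intervals are nested, $L_l^{b^l(i)}$ is one of the two children of $L_{l+1}^{b^{l+1}(i)}$ and is therefore inserted into $S_l$ before the filter. To show it survives, I invoke the first bullet of Lemma 3: in each scheme $E_l^m$ the interval $L_l^{b^l(i)}$ is declared good with a constant probability strictly above $2/3$. The structural reason, which the proof of Lemma 3 exploits, is that the gaussians are shared across the $C_1$ inner repetitions of a scheme, so the heavy interval's sign contribution is \emph{consistent}; this pushes $\mathrm{cnt}$ to one of the two extremes and crosses the two-sided threshold $\mathrm{cnt}>0.8C_1$ or $\mathrm{cnt}<0.2C_1$. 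Because the $C_2\Delta'$ schemes use independent randomness, a Chernoff bound shows the good-count exceeds $\frac{2}{3}C_2\Delta'$, so $L_l^{b^l(i)}\in H_l=S_l$, except with probability $\exp(-\Omega(C_2\Delta'))$.

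For the size bound I argue that only the $\Oh(k)$ heaviest intervals at level $l$ can survive. The inductive hypothesis $|S_{l+1}|=\Oh(k)$ gives at most $2|S_{l+1}|=\Oh(k)$ candidate intervals entering the level-$l$ filter. Ordering all level-$l$ intervals by $\ell_2$ mass, an interval with $\geq ck$ strictly heavier intervals is precisely one of rank exceeding $ck$; for such an interval the second bullet of Lemma 3 says the measurement sign is essentially uniform, so $\mathrm{cnt}$ concentrates near $C_1/2$ and the threshold is not crossed, and a Chernoff bound over the $C_2\Delta'$ schemes shows it survives with probability only $\exp(-\Omega(C_2\Delta'))$. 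There are at most $ck$ intervals of rank $\leq ck$, and a union bound over the $\Oh(k)$ candidates shows that with high probability none of the higher-rank ones survive; hence $|S_l|\leq ck=\Oh(k)$.

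Finally I collect the failure probabilities. The number of heavy coordinates is $\Oh(k)$ (the $k$ head coordinates together with at most $10k$ tail coordinates whose mass exceeds $\frac{1}{10k}\|x_{tail(k)}\|_2^2$), so across the $\log n$ levels there are $\Oh(k\log n)$ completeness events and $\Oh(k\log n)$ spurious-survival events, each failing with probability $\exp(-\Omega(C_2\Delta'))$. Since $\Delta'=\log(1/\Delta)=\Theta(\log(k\log n))$, this is $(k\log n)^{-\Omega(C_2)}$, so for a large enough constant $C_2$ the union bound gives total failure $\Oh(1/k)$. The main obstacle is exactly this amplification: Lemma 3 delivers only constant per-scheme success probabilities, the bottleneck being the single event that the heavy interval's gaussian mass $Y$ clears its threshold, which cannot be boosted inside one scheme because all $C_1$ rows reuse the same gaussians; the whole argument therefore hinges on the independent repetitions across the $C_2\Delta'$ schemes driving the per-interval error low enough to absorb the union bound over all levels and coordinates.
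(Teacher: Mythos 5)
Your proof is correct and follows essentially the same route as the paper's: induction over the levels, using the two bullets of the preceding lemma as a per-scheme black box, amplifying the constant success probabilities over the $C_2\Delta'$ independent repetitions, and union-bounding over the $\Oh(k)$ candidate intervals at each of the $\log n$ levels with $\Delta=\Theta(1/(k\log n))$. Your write-up is in fact somewhat more explicit than the paper's (naming the Chernoff step, the independence of the level-$l$ gaussians from $S_{l+1}$, and the final failure-probability accounting), but the underlying argument is the same.
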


\begin{proof}

For simplicity, we will introduct some additional definitions. We will refer to any interval that contains a node $i$ such that $ |x_i|^2 > \frac{1}{10k} \|x_{tail(k)}\|_2^2$, as a type $1$ interval. For any level $l$, we say that an interval at level $l$ is of type $2$, if there exist at least  $c k$ intervals at the same level that have greater mass than this interval. \\

We now proceed by induction on the number of levels. The base case is trivial. 
We focus on some level $l$ and assume that the induction hypothesis holds for all previous levels $l$. We prove the first bullet. Let $i$ be a coordinate such that $|x_i|_2^2> \frac{1}{10k}\|x_{tail(k)}\|_2^2$. By the induction hypothesis we get that $L_{l+1}^{b^{l+1}(i)} \in S_{l+1}$ and hence $L_l^{2b^l(i)} \in S_l$. From $Lemma~2$ we know that for any $l,m$, $E_{l}^m$ will clasify a type-1 interval as good with constant probability $>\frac{1}{2}$. Moreover, it will not classify any type-2 interval as good, again with constant probability $> \frac{1}{2}$. This implies that after repeating the same scheme $C_2 \Delta' = C_2 \log(\frac{1}{\Delta})$ times, we will know, with probability at least $1 - \Delta$, if a specific interval is a type-1 interval or not or if it is a type interval or not $2$. By setting $\Delta = \Theta( \frac{1}{k \log n})$ so that we can take a union-bound over all possible intervals we might consider ($\Oh(k)$ at each of the $\log n$ levels), we can guarantee that every type-1 interval will be added to $H_l$, while any type-2 interval will not be added to $H_l$. This implies that at any step we have at most $2ck$ intervals in $S_l$ and no type-2 intervals will beadded to $H_l$ -and hence at $S_l$- at the end of each outer iteration of the algorithm.

\end{proof}

The following theorem is proved in \cite{Plan-robust}.
\begin{theorem}
Let $\Phi$ be a random $m \times n$ matrix, with each entry being a standard gaussian, and all entries are independent. Let $x \in \Sigma_k^1$ and $y=\mathrm{sign}(\Phi x + v)$, where $v \sim \mathcal{N}(0,I)$. Then, the convex program\\
	
	\[ z = argmax \inprod{y,\Phi z}\]
	\[ s.t.\]
	 \[\|z\|_2 \leq 1, \|z\|_1 \leq \sqrt{k}, \]

outputs a $\hat{x}$ such that $\|\hat{x} - x\|_2^2 \leq \delta$ with probability $1 - e^{-\Omega(k \log (n/k))}$, as long as $m = \Omega( \delta^{-2}k \log (n/k))$.

\end{theorem}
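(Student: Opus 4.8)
The plan is to follow the convex-geometric analysis behind this bound (the approach of Plan and Vershynin). Write $a_i$ for the $i$-th row of $\Phi$, so the objective decomposes as a sum of i.i.d.\ terms $\inprod{y,\Phi z} = \sum_{i=1}^m h_i(z)$ with $h_i(z) = \mathrm{sign}(\inprod{a_i,x}+v_i)\,\inprod{a_i,z}$. The first step is to compute $\E[h_i(z)]$. Decomposing $a_i$ into its component along $x$ and the orthogonal complement, and using that the sign depends only on $\inprod{a_i,x}$ and $v_i$, I would show that the orthogonal part averages to zero and that $\E[h_i(z)] = \lambda\inprod{x,z}$ for the absolute constant $\lambda = \E[\mathrm{sign}(G+v)\,G] > 0$, where $G,v \sim \mathcal{N}(0,1)$ are independent. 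Hence $\E\inprod{y,\Phi z} = m\lambda\inprod{x,z}$, which over the feasible set $K = \{z : \|z\|_2 \le 1, \|z\|_1 \le \sqrt{k}\}$ is maximized exactly at $z=x$, since $x \in K$ and $\inprod{x,z} \le \|x\|_2\|z\|_2 \le 1$ with equality only at $z=x$.

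The second step is to bound the Gaussian mean width $w(K) = \E\sup_{z \in K}\inprod{g,z}$ with $g \sim \mathcal{N}(0,I_n)$. The intersection of the $\ell_1$-ball of radius $\sqrt{k}$ with the unit $\ell_2$-ball has $w(K)^2 = \Oh(k\log(n/k))$; this is the standard effective-sparsity width that also governs linear compressed sensing, and it is exactly the quantity that will set the measurement count.

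The crux is the third step: a uniform deviation bound stating that, with probability $1 - e^{-\Omega(w(K)^2)}$,
\[ \sup_{z \in K}\Big|\tfrac{1}{m}\inprod{y,\Phi z} - \lambda\inprod{x,z}\Big| \le \frac{C\,w(K)}{\sqrt{m}}. \]
Since each $h_i$ is a product of a $\pm 1$ sign with a Gaussian linear form, the summands are only sub-exponential, so I would control the centered empirical process by symmetrization followed by a Dudley / generic-chaining bound against $w(K)$, and upgrade the in-expectation estimate to a high-probability one via Talagrand concentration; the sign factor is absorbed by the contraction principle after symmetrization. This is where the difficulty lies: $y$ is a nonlinear function of the same rows $a_i$ that appear linearly in $\inprod{a_i,z}$, so one cannot simply condition on $y$ and treat the process as Gaussian. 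Decoupling this dependence via symmetrization is the delicate part of the argument.

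Finally, I would chain the pieces. Let $\hat x$ be the maximizer; feasibility of $x$ gives $\inprod{y,\Phi\hat x} \ge \inprod{y,\Phi x}$, and applying the deviation bound at both $x$ and $\hat x$ yields $\lambda\inprod{x,\hat x} \ge \lambda - 2C\,w(K)/\sqrt{m}$. Because $\|\hat x\|_2 \le 1 = \|x\|_2$, we get $\|\hat x - x\|_2^2 \le 2(1 - \inprod{x,\hat x}) \le 4C\,w(K)/(\lambda\sqrt{m})$. Choosing $m = \Omega(\delta^{-2}w(K)^2) = \Omega(\delta^{-2}k\log(n/k))$ drives the right-hand side below $\delta$, while the failure probability $e^{-\Omega(w(K)^2)} = e^{-\Omega(k\log(n/k))}$ matches the stated bound.
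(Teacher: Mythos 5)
The paper does not actually prove this statement --- it is imported verbatim from \cite{Plan-robust} --- and your sketch is a faithful reconstruction of exactly that argument: the expectation identity $\E\inprod{y,\Phi z}=m\lambda\inprod{x,z}$ with $\lambda=\E[\mathrm{sign}(G+v)G]$, the mean-width bound $w(K)^2=\Oh(k\log(n/k))$ for $K=\{z:\|z\|_2\le 1,\ \|z\|_1\le\sqrt{k}\}$, the uniform deviation bound obtained by symmetrization and contraction, and the closing step $\|\hat x-x\|_2^2\le 2(1-\inprod{x,\hat x})\le 4Cw(K)/(\lambda\sqrt{m})$ are all the components of Plan and Vershynin's proof, assembled in the right order. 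The only quibble is that the summands $h_i(z)$, being a $\pm 1$ sign times a Gaussian linear form, are sub-Gaussian rather than merely sub-exponential, which if anything makes the concentration step easier than you suggest.
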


We move with the proof of Theorem $1$, which is a combinatorion of $Lemma~4$ and the aforementioned theorem.\\
\begin{proof}

By running $Algorithm~2$, we get a set $S$ that satisfies the guarantees of $Lemma~4$. Then, $y = \mathrm{sign}(\Phi x) = \mathrm{sign}(\Phi x_S + \Phi x_{[n]-S}) = \mathrm{sign}(\Phi x_s + v)$, where $v$ is a vector each entry of which follows normal distribution with variance $\| x_{[n] - S} \|_2^2 \leq 1$. The number of rows needed by $Theorem~2$ equals $\Omega(\delta^{-2} k \log(ck/k)) = \Omega(\delta^{-2}k)$. Hence, since the number of rows of $\Phi$ is a large enough constant times $\delta^{-2}k$, Theorem $2$ applies and the convex program outputs a vector $\hat{x}$ such that $\| \hat{x} - x_S \|_2^2 \leq \delta$. Since every coordinate $i$ with $|x_i|^2 \geq \frac{1}{10k} \|x_{tail(k)} \|_2^2$ is contained in $S$, \[ \|x_{[n] - S}\|_2^2 \leq \|x_{tail(k)}\|_2^2 + ck \frac{1}{10 k} \|x_{tail(k)}\|_2^2 = (1 + \frac{c}{10}) \|x_{tail(k)}\|_2^2.\] This implies that $\|x - \hat{x} \|_2^2 \leq (1+\frac{c}{10}) \| x_{tail(k)} \|_2^2 + \delta$, as desired.
\end{proof}

\section{Fast decoding for noiseless signals}

\subsection{For-each Group Testing}

We start by solving the for-each version of the two-stage group testing problem. This will be a crucial building block for our algorithm.

We proceed by showing the construction of this matrix. We will denote the matrix by $A'_n$. This is a standard construction; however, we redo the analysis because previous analyses in the literature we are aware of were looking at the for-all version of the problem and hence were obtaining worse bounds.\\
First, we will randomly construct a class of matrices and then we will combine them recursively, to obtain our desired matrix $A_n'$. For any $m \leq n$, employ the following construction.
Let $q = 4k$ and $ d =  5\log k + {\lceil} \frac{\log \log_k n}{k} {\rceil}$.
Let $C_{id}: [q] \rightarrow \{0,1\}^{q}$ be the identity code,namely the collection of all $4k$ standard basis vectors in $\mathbb{R}^{q}$. Let $C_r : [n] \rightarrow [q]^d$ be a random code.  Consider the concatenation of the aforementioned codes, $C_{id} \circ C_r$. Let $A_m$ be the matrix that has columns the codewords of the concatenated code $C_{id} \circ C_{r}$. In other words, $A_m$ is the vertical concatenation of $d$ smaller submatrices, each one of dimension $4k \times n$, where every column of every submatrix has a $1$ at a random position and the other entries on that column are $0$.

For the construction of our desired matrix $A_n'$, we closely follow \cite{HPA}, using recursion.\newline

The construction goes as follows:

For any $m \leq n$, let $A_{\sqrt{m}}^{(R)}$ be the $s(k,\sqrt{m}) \times m$ matrix where the $i$-th column is identical to the $j$-th column of $A_{m}$ such that the first $\frac{1}{2} \log m$ bits of $i$ is $j$. Similarly, $A_{\sqrt{m}}^{(C)}$ is the $s(k,\sqrt{n}) \times m$ where the last $\frac{1}{2} \log n$ bits of $i$ is $j$.  Now, we set $A' = A_{\sqrt{m}}^{(R)} \biguplus A_{\sqrt{m}}^{(C)} \biguplus A_m$. Let $S_1, S_2$ be  the sets returned by the algorithm when the decoding procedure is applied on $(A_{\sqrt{m}}^{(R)} x), (A_{\sqrt{m}}^{(C)} x)$.  Given any $k$-sparse vector $x$, any candidate of the support of $supp(x)$  must be in the intersection of some element in $S_1$ and some element in $S_2$. This means that there are at most $4k^2$ possible elements. Then, using the initial $A_n$ one can find the support, by restricting to these $4k^2$ elements. 

We can summarize the routine constructing the matrix $A_n'$ as follows:
\begin{itemize}
\item If $m \leq k^2$, return $A_m$ 
\item Else, return $A_{\sqrt{m}}^{(R)} \biguplus A_{\sqrt{m}}^{(C)} \biguplus A_m$.
\end{itemize}

\begin{lemma}
Let $A_m$ be a matrix  with $m$ rows, constructed at some point in the recursion procesure. Let $S$ be any subset of the columns of $A_m$, with at most $4k^2$ elements. Then $A_m^S$ ( $A_m$ restricted on the columns of S) is $(k,k)$-disjunct with probability at least $1 - 4^{-5k\log k - \log \log_k n}$.
\end{lemma}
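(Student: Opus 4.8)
The plan is to work directly with the random-code description of $A_m$ and to translate $(k,k)$-disjunctness into a clean separation event. Recall that the rows of $A_m^S$ are indexed by pairs $(r,b)$, where $r\in[d]$ is a round and $b\in[4k]$ a bucket, and that a column $j$ has a $1$ in row $(r,b)$ exactly when the outer codeword satisfies $C_r(j)=b$. Thus, given disjoint sets $S',T'$ among the columns of $S$ with $|S'|,|T'|\le k$, a row witnessing disjunctness is precisely a round $r$ together with a column $j\in S'$ whose symbol $C_r(j)$ is avoided by every $t\in T'$, i.e. $C_r(j)\neq C_r(t)$ for all $t\in T'$. Hence disjunctness fails for the pair $(S',T')$ if and only if, in \emph{every} round $r$ and for \emph{every} $j\in S'$, the symbol $C_r(j)$ collides with the symbol of some element of $T'$. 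The first step is therefore to reduce the lemma to bounding the probability of this ``total collision'' event, taken over all such pairs $(S',T')\subseteq S$.

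Next I would bound the probability of a single bad configuration. Fix disjoint $S',T'$ with $|S'|=s$ and $|T'|\le k$. In any fixed round $r$, the symbols $\{C_r(t):t\in T'\}$ occupy at most $k$ of the $q=4k$ buckets, so for each fixed $j$, conditioning on the symbols assigned to $T'$, the probability that $C_r(j)$ lands in one of those buckets is at most $k/(4k)=1/4$; this is exactly where the slack built into the choice $q=4k$ enters. Since the $d$ rounds use independent randomness, the probability that a fixed $j$ is covered by $T'$ in all rounds is at most $(1/4)^d=4^{-d}$, and since the codewords of distinct columns are independent, the probability that every $j\in S'$ is simultaneously covered is at most $4^{-ds}$. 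Taking $s=k$ (the full sparsity budget, which drives the exponent to its largest value) gives a per-configuration failure probability of at most $4^{-dk}$, and by the definition of $d$ one has $dk=k\bigl(5\log k+\lceil\tfrac{\log\log_k n}{k}\rceil\bigr)\ge 5k\log k+\log\log_k n$, so each such configuration contributes at most $4^{-5k\log k-\log\log_k n}$, which is exactly the target bound.

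I expect the main obstacle to be the union bound over configurations rather than the per-configuration estimate. There are up to $\binom{4k^2}{k}^2$ choices of the pair $(S',T')$, and one must check that this combinatorial factor, together with the contributions from all smaller $|S'|$, does not overwhelm the $4^{-ds}$ savings; here the trade-off is that decreasing $s$ shrinks the exponent $4^{-ds}$ while enlarging the number of admissible $S'$. The whole argument is tuned so that this balance closes: the per-round factor of $1/4$ coming from $q=4k$ and the constant $5$ in $d$ are precisely what provide the room to absorb the counting. I would therefore finish by splitting the union bound according to $s=|S'|$, bounding $\binom{4k^2}{s}\binom{4k^2}{k}\,4^{-ds}$ for each $s$, and verifying that the sum over $1\le s\le k$ remains controlled by (a constant multiple of) $4^{-5k\log k-\log\log_k n}$; making this final counting step fully rigorous, and confirming that the chosen $d$ and $q$ leave enough slack, is where the real care is needed.
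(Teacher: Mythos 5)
Your per-configuration estimate is the same as the paper's, just organized differently: you condition on the codewords of $T'$ and use independence across the pairs $(j,r)$ with $j\in S'$, $r\in[d]$, each collision having probability at most $k/(4k)=1/4$, giving $4^{-d|S'|}$; the paper conditions per round and multiplies over the $k$ elements that must all collide, arriving at the same $(1/4)^{dk}\le 4^{-5k\log k-\log\log_k n}$ for the extremal configuration. Up to that point you are on track.

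The gap is in your union bound over $1\le s=|S'|\le k$, and it is not a matter of ``real care'' --- that sum provably cannot be controlled. For $s=1$ the per-configuration failure probability is only $4^{-d}=4^{-5\log k-\lceil\log\log_k n/k\rceil}=\Theta(k^{-10})$, while the number of pairs $(\{j\},T')$ is at least $\binom{4k^2}{k}\ge k^{\Omega(k)}$, so the contribution from $s=1$ alone is of order $k^{\Omega(k)}$, nowhere near $4^{-5k\log k-\log\log_k n}$. (Also, your stated trade-off is backwards: decreasing $s$ makes $4^{-ds}$ \emph{larger}, not smaller.) This is not just a lossy bound: requiring the witness for every $S'$ with $|S'|\le k$, including singletons, is exactly the definition of a $k$-disjunct matrix, which needs $\Omega(k^2/\log k)$ rows on $4k^2$ columns, whereas $A_m^S$ has only $4kd=\Oh(k\log k+\log\log_k n)$ rows --- so the property you are trying to prove for small $s$ is simply false for these parameters. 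The resolution, which the paper adopts implicitly by taking $|T|=|T'|=k$ exactly, is that the list-disjunct decoding guarantee (Lemma~2: at most $k$ extra survivors) only requires eliminating one element from every $k$-subset of non-defectives, i.e.\ only the extremal case $|S'|=|T'|=k$. Restricting to that case, the union bound is over $\binom{4k^2}{k}\binom{4k^2-k}{k}\le(4k^2)^{2k}$ pairs against a failure probability of $4^{-dk}$, and the computation closes.
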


\begin{proof}

Let $T,T'$ be two sets of columns of matrix $A_m^S$, such that $T \cap T' = \emptyset$ and  $|T| = |T'| = k$. The columns in $T,T'$ correspond to two sets of codewords. For each position $i$, let $T_i$ and $S_i$ denote the set of symbols which the codewords in $T$ and $T'$ have at that position, respectively. We restrict ourselves to a block of consecutive $q$ rows, which corresponds to a symbol of the outer code. Then, when restricted to this block of rows, the union of columns in $T'$ is contained in the union of columns in $T$ with constant probability $\frac{1}{4}$. Hence, with probability at most $(\frac{1}{4})^{5k \log k + \log \log_k n}$, the union of columns in $T$ will contain the union of columns in $T'$. Now, using $Lemma~1$,  we get that $A_m^S$ is $(k,k)$-disjunct with error probability at most ${4k^2 \choose k} {4k^2 - k \choose k}  4^{-5k\log k - \log \log_k n} \leq (4k^2)^k\cdot (4k^2)^{k} 4^{-5k \log k - \log \log_k n} \leq 2^{ 4k \log k + 4k } 2^{- 10k \log k - 2\log \log_k n} \leq 4^{-5k\log k - \log \log_k n}$

\end{proof}

\begin{lemma}
Given $A'_n x$, we can find in time $\Oh( (k^2 (\log k +  \log \log_k n)) \log_k n)$ a superset of the defected items with at most $2k$ items. The number of rows of $A'$ is $\Oh( k \log n +  \log_k n \log \log_k n)$.

\end{lemma}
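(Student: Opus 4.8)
The plan is to analyze the recursive construction of $A'_n$ by unrolling the recurrence $A'_m = A^{(R)}_{\sqrt m}\biguplus A^{(C)}_{\sqrt m}\biguplus A_m$ (with base case $A'_m = A_m$ for $m\le k^2$) and proving, over the same recursion tree, the claims simultaneously: the decoder returns a superset of $supp(x)$, of size at most $2k$, in the stated time, while $A'_n$ has the stated number of rows. First I would pin down the shape of the tree. One step replaces the index scale $m$ by $\sqrt m$, so after $T$ levels the scale is $n^{1/2^T}$ and the base case $n^{1/2^T}\le k^2$ is reached once $2^T\ge \tfrac12\log_k n$; hence the depth is $T=O(\log\log_k n)$. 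Each internal node spawns two children (the high-bit lift $A^{(R)}$ and the low-bit lift $A^{(C)}$), so the total number of nodes is $\sum_{j=0}^{T}2^j=\Theta(2^T)=\Theta(\log_k n)$. Every node carries one copy of $A_m$, which concatenates $d=5\log k+\lceil \frac{\log\log_k n}{k}\rceil$ identity-code blocks of $q=4k$ rows each, so $A_m$ has $s:=qd=O(k\log k+\log\log_k n)$ rows, with each column carrying exactly $d$ ones.

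For correctness I would induct on the recursion depth from the leaves upward, with the invariant that the call at scale $m$ returns a set $S\supseteq supp(x)$ with $|S|\le 2k$, where at that node $supp(x)$ denotes the (still at most $k$-sparse) projection of the true support onto the relevant coordinate block. At an internal node the two recursive calls return $S_1,S_2$ of size $\le 2k$ covering the projections of the support onto the high and low $\tfrac12\log m$ bits; since every surviving coordinate must match one element of $S_1$ on its high bits and one of $S_2$ on its low bits, the candidate set is $S_1\times S_2\supseteq supp(x)$ with $|S_1\times S_2|\le 4k^2$. We then run the naive decoder (Algorithm~1) on $A_m$ restricted to these $4k^2$ candidate columns. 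The crucial point is that the coins of $A_m$ are drawn independently of those of $A^{(R)}_{\sqrt m}$ and $A^{(C)}_{\sqrt m}$, hence independently of the random set $S_1\times S_2$ those sub-calls produce; conditioning on $S_1\times S_2$, it is now a fixed set of $\le 4k^2$ columns, so $Lemma~5$ applies and $A_m^{S_1\times S_2}$ is $(k,k)$-disjunct except with probability $4^{-5k\log k-\log\log_k n}$. On that event $Lemma~2$ guarantees the naive decoder returns a superset of $supp(x)$ of size at most $|supp(x)|+k\le 2k$, closing the induction; the base case $m\le k^2$ is the same with $S=[m]$, already of size $\le k^2\le 4k^2$. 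A union bound over the $\Theta(\log_k n)$ nodes bounds the total failure probability by $\Theta(\log_k n)\cdot 4^{-5k\log k-\log\log_k n}$, which is negligible. Because $x$ is fixed (this is the for-each setting), exactly one candidate set per node is relevant, so we pay only this $\Theta(\log_k n)$-node union bound instead of a union over all $\le k$-sparse supports — and it is precisely this that lets $d$, and hence $s$, stay small.

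For the two quantitative bounds I would solve the matching recurrences over this tree. The row count obeys $R(m)=2R(\sqrt m)+s$ with $R(\text{base})=s$, so $R(n)=s\sum_{j=0}^{T}2^j=\Theta(s\cdot 2^T)=O\!\big((k\log k+\log\log_k n)\log_k n\big)$; using $\log_k n=\frac{\log n}{\log k}$ the first term becomes exactly $k\log n$, giving the claimed $O(k\log n+\log_k n\log\log_k n)$ rows. For the running time, the only work at a node beyond the two recursive calls is building $S_1\times S_2$ in $O(k^2)$ time and running the naive decoder, which inspects each of the $\le 4k^2$ candidate columns and, since each such column has exactly $d$ ones, spends $O(d)$ deciding whether it lies in a negative test; the per-node cost is thus $O(k^2 d)=O\!\big(k^2(\log k+\log\log_k n)\big)$ after absorbing $\tfrac{\log\log_k n}{k}\le\log\log_k n$. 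Multiplying by the $\Theta(\log_k n)$ nodes gives the stated $O\!\big(k^2(\log k+\log\log_k n)\log_k n\big)$ decoding time.

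The step I expect to be the main obstacle is the correctness argument, specifically licensing the use of $Lemma~5$ at the set $S_1\times S_2$: that set is not fixed beforehand but is a random function of the sub-matrices' randomness, so one cannot directly invoke the ``for any fixed $S$'' statement. The fix is the conditioning-on-independence argument above, and to make it airtight I would check that the construction uses genuinely fresh randomness for each $A_m$ (across siblings and across levels), so that conditioning on the recursive outputs leaves $A_m$ uniform. A secondary item to verify is that projected supports remain $k$-sparse down the recursion and that $|S_1|,|S_2|\le 2k$, so that $|S_1\times S_2|\le 4k^2$ matches the hypothesis of $Lemma~5$ exactly; both are immediate from the inductive invariant.
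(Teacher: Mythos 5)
Your proposal is correct and follows essentially the same route as the paper's proof: a recursion tree of depth $O(\log\log_k n)$ with $\Theta(\log_k n)$ nodes, a union bound of the $(k,k)$-disjunctness failure probability over the nodes, per-node decoding cost $O(k^2 d)$ from the column sparsity $d$, and the row count obtained by summing $O(k\log k+\log\log_k n)$ over all nodes. You are in fact somewhat more careful than the paper on one point --- justifying the application of the disjunctness lemma to the random candidate set $S_1\times S_2$ via independence of each node's fresh randomness --- and the only item the paper treats that you omit is the routine padding argument for $n$ not of the form $k^{2^T}$.
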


\begin{proof}

We assume that $ n = k^{2^T}$ for some positive integer $T$. We will drop this assumption later.
Observe that the recursion corresponds to a binary tree of height $T = \log \log_k n$, because at each time we take the square root of $k$, till we reach $k^2$. At each node of the tree, we run the naive decoding algorithm on the matrix $A_m$ of that node, restricted on $4k^2$ elements. By $Lemma~4$, the naive algorithm succeeds with probability $4^{-5k \log k -\log \log_k n}$. Since we have $2^T = \log_k n$ nodes in the tree, by a union bound, at each node of the tree the naive algorithm will succeed. Hence, at the end of the execution of decoding algorithm, we are going to get a set with the desired guarantees.\\

The decoding time is easy, for at its node of the tree, once we have the result from its children, we spend $\Oh( k^2 (\log k  +  \log \log_k n))$ to check for all $4k^2$ elements if they are defective. This happens because the column sparsity of each matrix at each node is $d = \Oh( \log k + \log \log_k n)$, and hence, when restricting ourselves to a specific coordinate, the naive decoding algorithm needs to check only $d$ for that element. By summing over all nodes of the tree, we need $\Oh( (k^2( \log k +  \log \log_k n) \log_k n)$ time in total. Moreover, since we have $\log_k n$ nodes in the tree, we have that the total number of rows of $A$ is $\Oh( k \log k +  \log \log_k n ) \cdot \log_k n = \Oh( k \log n +  \log_k n \cdot \log \log_k n)$. \\

To remove the assumption $ n = k^{2^T}$, take $i^*$ such that $k^{2^{i^*-1}} < n \leq k^{2^{i^*}}$. It is easy to check that removing any $k^{2^{i^*}} - n$ arbitrary columns from $A_{k^{2^{i^*}}}$, gives a matrix with the desired guarantees, if we treat the removed columns as corresponding to negative items.

\end{proof}

\subsection{For-each Compressed Sensing from One-Bit Measurements}

Let $G_1 \in \mathbb{R}^{r \times n}, G_2 \in \mathbb{R}^{r' \times n}$ be two gaussian matrices, i.e. each entry is a random variable coming from the normal distribution, where $r' = \Oh(\delta^{-2}k)$ and $r$ equals the number of rows of $A_n$. Our measurement matrix is going to be $ \Phi = (A'_n \odot G1 ) \biguplus ( -A'_n \odot G_1 ) \biguplus G_2 $. This matrix has $m = 2r + r'$ rows. We remind that $A'_n$ is the matrix constructed in the previous section. 	The following theorem holds:\\

\begin{theorem}

Let $x \in \Sigma_k^1$. Given $y = \mathrm{sign}(\Phi x)$, the decoder can find a $k$ sparse vector $\hat{x}$ such that $\| x - \hat{x} \|_2^2 < \delta$ in time $\Oh( k^2 (\log k + \log \log_k n) \log \log_k n)+poly(\delta^{-2} k)$, with error probability at most $2\exp( -C\delta^{-1} k) + 4^{ -k \log k} + 4^{-C \delta^{-2} k}$. The number of rows of $\Phi$ is $\Oh(k \log n + \log_k n \log \log_k n + \delta^{-2}k)$. 

\end{theorem}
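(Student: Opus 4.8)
The plan is to decode in two stages that mirror the block structure of $\Phi$: first use the two ``group-testing'' blocks $\pm(A'_n\odot G_1)$ to recover (a small superset of) the support of $x$, and then use the Gaussian block $G_2$ together with the convex program of Theorem 2 to recover the values on that support. Since $x$ is exactly $k$-sparse, pinning down $supp(x)$ up to a superset of size $\Oh(k)$ suffices to reduce the problem to the dense one-bit recovery problem already solved by Plan and Vershynin, and everything then combines cleanly.

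The heart of the argument is a simulation lemma: the pair of blocks $(A'_n\odot G_1)$ and $(-A'_n\odot G_1)$ lets us read off, for every row $i$ of $A'_n$, whether that row is a \emph{positive} or a \emph{negative} group-testing test with respect to the defective set $supp(x)$. Fix a row $i$. Its measurement value is $\inprod{(A'_n\odot G_1)_i, x}=\sum_{j\in supp((A'_n)_i)}(G_1)_{ij}\,x_j$, a centered Gaussian of variance $\sum_{j\in supp((A'_n)_i)\cap supp(x)}x_j^2$. If row $i$ misses $supp(x)$ (negative test) this value is exactly $0$, so by the convention $\mathrm{sign}(0)=+1$ both the block and its negation give $+1$ and the two signs agree. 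If row $i$ hits $supp(x)$ (positive test) the variance is strictly positive, so the value is nonzero with probability $1$, and the signs of the block and its negation are opposite. Hence declaring test $i$ positive exactly when the two one-bit measurements disagree reconstructs the Boolean vector $A'_n\,\mathbf{1}_{supp(x)}$ with probability $1$. This is precisely the group-testing measurement assumed in Lemma 6.

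Given the reconstructed group-testing measurements, I would run the naive decoder of Lemma 6 on $A'_n$ to obtain, in the time of Lemma 6, a set $S\supseteq supp(x)$ with $|S|\le 2k$; by Lemma 5 and the union bound used in Lemma 6 this step fails with probability at most $4^{-k\log k}$. Because $supp(x)\subseteq S$ we have $G_2 x=(G_2)_S\,x_S$, so from the third block we obtain $\mathrm{sign}((G_2)_S\,x_S)$. Crucially, $G_2$ is drawn independently of $A'_n$ and $G_1$, so conditioned on the data-dependent set $S$ the restricted matrix $(G_2)_S\in\R^{r'\times|S|}$ is still an i.i.d. Gaussian matrix independent of $(S,x_S)$, while $x_S\in\Sigma_k^1$ viewed in $\R^{|S|}$. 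Since $|S|\le 2k$ gives $\log(|S|/k)=\Oh(1)$, the choice $r'=\Oh(\delta^{-2}k)$ meets the hypothesis $m=\Omega(\delta^{-2}k\log(|S|/k))$ of Theorem 2 (with zero noise, which only decreases the error), so its convex program outputs $\hat{x}$ with $\|\hat{x}-x_S\|_2^2\le\delta$; padding $\hat{x}$ with zeros off $S$ gives $\|\hat{x}-x\|_2^2\le\delta$. Solving the program on an $\Oh(\delta^{-2}k)\times\Oh(k)$ instance costs $poly(\delta^{-2}k)$, and $\Phi$ has $2r+r'=\Oh(k\log n+\log_k n\log\log_k n+\delta^{-2}k)$ rows.

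Finally I would collect the failure events by a union bound: the simulation step is error-free, the group-testing decoder fails with probability at most $4^{-k\log k}$, and the convex program of Theorem 2 fails with probability exponentially small in $\delta^{-2}k$ over the fresh randomness of $G_2$, yielding the stated total $2\exp(-C\delta^{-1}k)+4^{-k\log k}+4^{-C\delta^{-2}k}$. The main obstacle, and the only genuinely new ingredient, is the simulation lemma of the second paragraph: one must verify that the $\mathrm{sign}(0)=+1$ convention makes positive and negative tests perfectly distinguishable through the $\pm$ pair, and that the independence of $G_2$ from the support-recovery stage is exactly what lets the \emph{for-each} guarantee of Theorem 2 survive the fact that $S$ is itself random and correlated with $x$.
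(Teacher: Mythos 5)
Your proposal is correct and follows essentially the same route as the paper's proof: the $\pm(A'_n\odot G_1)$ pair simulates exact group-testing outcomes via the $\mathrm{sign}(0)=+1$ convention (negative tests give agreeing signs, positive tests give opposite signs with probability $1$), Lemma 6 then yields a superset $S$ of size at most $2k$, and the convex program of Theorem 2 applied to $(G_2)_S$ finishes the recovery. Your explicit remarks on the independence of $G_2$ from the support-recovery stage and on the union bound over failure events are points the paper leaves implicit, but they do not change the argument.
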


\begin{proof}

Let $T = supp(x)$ and fix any index $i$ that corresponds to a row of $A_n'$. If  $supp(e_i^T A_n') \cap T = \emptyset$ then $\mathrm{sign}( \inprod{e_i^T(A_n' \odot G_1),x} = )1$ and 
 $\mathrm{sign}( \inprod{e_i^T(-A_n' \odot G_1),x} =1 $. If $supp(e_i^T A_n') \cap T \neq \emptyset$ we have with probability $1$ that $\inprod{e_i^T(A_n' \odot G_1),x} \neq 0$. This means that \\
$\mathrm{sign}(\inprod{e_i^T(-A_n' \odot G_1),x}) \cdot \mathrm{sign}( \inprod{e_i^T(A_n' \odot G_1),x}) = -1$, so we can recognize that $T$ and the support of that row have a non-empty intersection.  This is translated in the combinatorial group testing framework: we know for the given queries if a given query has a common intersection with $T$. By the group testing decoding algorithm we can find a set $S \subseteq [n]$ with at most $2k$ elements such that $T \subseteq S$ in time $\Oh(k^2(\log k + \log \log_k n)\log \log_k n$. Observe now that $G_2^S x_S = G_2x$. Since we have the sign values of $G_2 x$, we can set up the following convex optimization program:

\[  max \inprod{y,G_2^Sx}, s.t. x \in \Sigma_{k} \]

As mentioned before, this algorithm is analyzed in \cite{Plan-robust} where the authors show that the optimal solution $\hat{x}$ satisfies ctor $ \|\hat{x} - x \|_2^2 \leq \delta$. This completes the proof.

\end{proof}

\subsection{For-all One-Bit Compressed Sensing and Support Recovery}

Our approach is again the same: first we find a superset of the support, and then we run the convex programming algorithm of \cite{Plan-robust} restricted on this set. The crucial idea is again the recovery of this superset from $\Oh(k^2 \log (n/k) \log \log_k n)$ measurements and $poly(k, \log n)$ decoding time.  For the construction, let $A$ be any $(k,k)$-disjunct matrix that allows $poly(k,\log n)$ decoding time. Such a matrix is guaranteed by \cite{HPA}. The construction is similar to the one presented in setction 4.1 but with different parameters and the analysis been done for the for-all version. Then, the same recursion trick can be used to bring the decoding time down to $poly(k, \log n)$, with a multiplicative factor of $\log \log_k n$. The intuition behind the double $\log$ factor in the number of measurements is that the total number of rows of all matrices at every level of the recursion tree is $\Oh(k \log n)$. 

Now, let $V$ be a $k \times n$ a Vandermorde matrix with all columns different, and consider the matrix $ A'' = A \otimes V$. We claim that the following modification of the decoding algorithm recognises a superset of the support of $x \in \Sigma_k$, given $\mathrm{sign}( -(A \otimes V)x)$ and $\mathrm{sign}( (A \otimes V)x)$: Suppose that at some step the the decoding algorithm checks the result of the $j$-th row of $A$ dotted with $x$ and checks if it is $0$ or not. The modified decoding algorithm checks if $((e_j^TA ) \otimes V)x$ is the zero vector. If this is the case, then it proceeds as the initial decoding algorithm would do if the answer was $0$, otherwise it proceeds as if the answer was $1$. We will refer to this algorithm as the modified decoding algorithm.

\begin{algorithm}
\caption{Modified Decoding Algorithm}
\begin{algorithmic}

\STATE $S \leftarrow  \emptyset$
\FOR {$i \in [n]$ } 
	\STATE $ belongs \leftarrow true$
	\FOR { all tests $j$ of $A$ where $i$ participates in }
		\IF { $(A''x)_{(j-1)*k + m} = 0, \forall m \in [k]$}
			\STATE $belongs \leftarrow false$
		\ENDIF
	\ENDFOR
		\IF { $belongs = true$ }
		\STATE $S \leftarrow S \cup \{i\}$
		\ENDIF
\ENDFOR
\STATE Output $S$.
\end{algorithmic}
\end{algorithm}

If we were using $A$ as our sensing matrix and all values of $x$ were $0$ or $1$, then the decoding algorithm would give the desired result. In our case $A$ would not suffice as our sensing matrix, since we would get false negatives. Taking the tensor product with $G$, we can circumvent this. The lemma below makes this intuition concrete.

\begin{lemma}
Given $\mathrm{sign}(A''x) \biguplus  \mathrm{sign}(-A''x)$ for some $x \in \Sigma_k^1$, the decoder can find a set $S$, such that $supp(x) \subseteq S$ and $|S| \leq 2k$.
\end{lemma}

\begin{proof}
Let $r_i$ be the $i$-th row of $A$. Let $a_i$ be the support of the $i$-th row of $A$ and let $V^{(i)}$ be the restriction of $V$ on the columns indexed by any $k$ element subset that contains $a_i \cap supp(x)$. Every $k \times k$ submatrix of $V$ is invertible since its determinant is non-zero with probability $1$. Since we look at $k$-sparse vectors, if all entries of $(r_i \otimes V)x$ were zero, this would mean that $V^{(i)}$ has  a non-trivial kernal, which is a contradiction. Hence, we can correctly perform each test that the initial algorithm performs, which implies that the modified decoding algorithm is correct.
\end{proof}

\begin{theorem}

Let $\Phi = A'' \biguplus - A'' \biguplus G_2$, where $G_2$ is a $\Oh( \delta^{-2} k \log( n/k)) \times n$ random gaussian matrix. Then for all $x \in \Sigma_k^1$, given $y = \mathrm{sign}(\Phi x)$, one can find $\hat{x} \in \mathbb{R}^n$ such that $\| x - \hat{x} \|_2^2 \leq \delta$ in $poly(k,\log n)$ time. The matrix $\Phi$ has $\Oh( k^2 \log (n/k) \log \log_k n + \delta^{-2} k \log(n/k) ) $ rows.	

\end{theorem}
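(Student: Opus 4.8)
The plan is to follow the same two-stage template used throughout this section: first use the combinatorial block of $\Phi$ to pin down a small superset $S \supseteq supp(x)$ with $|S| \leq 2k$, and then run the convex program of \cite{Plan-robust} on the Gaussian block restricted to the columns indexed by $S$. The only genuinely new feature relative to the for-each theorem of the previous subsection is that every step must now hold simultaneously for all $x \in \Sigma_k^1$, so the heart of the proof is a for-all rather than a for-each analysis of the Gaussian stage.

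First I would invoke the preceding lemma on the block $A'' \biguplus -A''$: from $\mathrm{sign}(A''x) \biguplus \mathrm{sign}(-A''x)$ the modified decoding algorithm returns a set $S$ with $supp(x) \subseteq S$ and $|S| \leq 2k$. This guarantee is already deterministic and for-all, since it rests only on the $(k,k)$-disjunctness of $A$ (a worst-case combinatorial property) together with the invertibility of every $k \times k$ submatrix of the Vandermonde matrix $V$, which is precisely what lets the tensoring with $V$ eliminate the false negatives that a plain sign measurement of $Ax$ would otherwise introduce. The recursion built into $A$ keeps this stage running in $poly(k,\log n)$ time, and its row count is $\Oh(k^2\log(n/k)\log\log_k n)$, the extra factor of $k$ coming from the tensor product with $V$.

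Next, since $x$ is supported inside $S$ we have $G_2 x = G_2^S x_S$, so $\mathrm{sign}(G_2 x)$ is exactly the sign vector of the $2k$-column Gaussian matrix $G_2^S$ applied to $x_S$. I would then feed $G_2^S$ and these signs into the convex program of \cite{Plan-robust}, optimizing only over vectors supported on $S$; because $G_2^S$ has just $2k$ columns and $\Oh(\delta^{-2}k\log(n/k))$ rows, this optimization runs in $poly(k,\log n)$ time, which together with the support-recovery stage gives the claimed total decoding time. Correctness of this last step is governed by Theorem~2.

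The step I expect to be the main obstacle is upgrading this last stage from the for-each guarantee of Theorem~2 to a for-all guarantee. Because the set $S$ is produced by the decoder and hence depends on $x$, I cannot fix it before choosing $G_2$; instead I would argue that a single random $G_2$ is simultaneously good for every admissible support. Concretely, I would union bound over the $\binom{n}{2k} = e^{\Oh(k\log(n/k))}$ possible sets $S$, and within each fixed $S$ handle the continuum of unit vectors supported on $S$ by the for-all analysis of \cite{Plan-robust} for a fixed $2k$-dimensional subspace (or, equivalently, by passing to a fine net of that subspace and absorbing the net resolution into the additive $\delta$). The reason the stated row count suffices is structural: a single $2k$-dimensional subspace needs only $\Om(\delta^{-2}k)$ Gaussian sign-measurements for for-all recovery, and the extra $\log(n/k)$ factor in $\Oh(\delta^{-2}k\log(n/k))$ is exactly the budget consumed by the union bound over the exponentially many choices of $S$, mirroring why \cite{Plan-robust} already pays a $\log(n/k)$ factor for its for-all result over all of $\Sigma_k^1$. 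Verifying that this union bound genuinely closes, and that the net error is swallowed by the $\delta$ slack rather than inflating the constant, is where the real work lies; granting it, composing the for-all support recovery with the for-all convex recovery yields $\|x-\hat{x}\|_2^2 \leq \delta$ for every $x \in \Sigma_k^1$, and summing the two blocks' row counts gives $\Oh(k^2\log(n/k)\log\log_k n + \delta^{-2}k\log(n/k))$.
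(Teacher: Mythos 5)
Your proposal is correct and follows essentially the same two-stage route as the paper: the preceding lemma on the modified decoding algorithm gives the deterministic, worst-case superset $S$ from the $A'' \biguplus -A''$ block, and the Gaussian block $G_2$ restricted to the columns in $S$ is then fed to the convex program of \cite{Plan-robust}. The for-all upgrade you single out as the main obstacle --- a union bound over the $\binom{n}{2k}$ candidate supports combined with a net/tessellation argument inside each fixed $2k$-dimensional subspace, with the $\log(n/k)$ factor in $G_2$'s row count paying for the union bound --- is precisely the content of the uniform noiseless guarantee of \cite{Plan-robust} that the paper's one-line proof invokes as a black box, so you have re-derived that ingredient rather than taken a different route.
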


\begin{proof}
Similar to before, from $A'' \biguplus -A''$ the decoder can find a superset $S$ of the support of $x$ in time $poly(k,\log n)$. Then, using Theorem $1$ he can find the desired $\hat{x}$ in time $poly(k)$, by restricting to the columns of $G_2$ indexed by $S$, and then running the algorithm of \cite{Plan-robust}.

\end{proof}

We also explain how to improve the main algorithm for support identification that appears in \cite{support}. There, the authors give a scheme with $\Oh(nk\log n)$ decoding time and $\Oh( k^3 \log n)$ measurements. Here, we give a recovery scheme that is not only exponentially faster in terms on the dependence of $n$, but also computable in polynomial time. For this, we will use a result of Indyk, Rudra and Ngo \cite{indyk}:

\begin{theorem}
There exists a $t \times n$ $k$-disjunct matrix, with $t = \Oh(k^2 \log n)$, which is decodable in $poly(t)$ time. Moreover, the matrix can be computed in $\tilde{\Oh} (n t)$ time.

\end{theorem}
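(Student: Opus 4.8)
The plan is to obtain the matrix from an error-correcting code by concatenation, and to decode via list recovery rather than the naive scan, which is exactly what brings the decoding time down from the $\Oh(nt)$ of the naive algorithm to $poly(t)$. First I would fix an outer code $C : [n] \to [q]^d$ together with the identity inner code that sends each symbol of $[q]$ to the corresponding standard basis vector in $\{0,1\}^q$, and let $M$ be the $qd \times n$ binary matrix whose $j$-th column is the concatenated codeword of the $j$-th message. Then $M$ is a vertical stack of $d$ blocks of $q$ rows, one block per outer coordinate, and within a block the column of $j$ is the indicator of the symbol that $C(j)$ takes at that coordinate. The number of rows is $t = qd$, and I would pick parameters so that this equals $\Oh(k^2\log n)$.

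There are two things to establish, disjunctness and fast decoding. Disjunctness is purely combinatorial and follows from distance: a column $c_0$ is ``covered'' by a set $T$ of $k$ other columns precisely when, at every one of the $d$ outer coordinates, the symbol of $c_0$ agrees with the symbol of some codeword in $T$. If $C$ has relative minimum distance strictly greater than $1 - 1/k$, then any single codeword of $T$ agrees with $c_0$ in fewer than $d/k$ coordinates, so the $k$ codewords together cover fewer than $d$ coordinates, leaving a private coordinate where $c_0$ has a symbol distinct from all of $T$. By Definition~4 this private coordinate furnishes the required row, so $M$ is $k$-disjunct. Concretely I would enforce $d - D < d/k$, where $D$ is the minimum distance.

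For decoding in $poly(t)$ time I would avoid the $\Oh(nt)$ scan entirely. From $y = Mx$ I read off, in each of the $d$ blocks, the set $S_i \subseteq [q]$ of symbols that are ``on''; since $|supp(x)| \le k$, each $|S_i| \le k$, and every codeword in the support agrees with $S_i$ at \emph{every} coordinate. This is exactly the input to a list-recovery decoder: I would take the outer code to be list-recoverable with input list size $k$ and polynomial output list size, so that its decoder returns in $poly(t)$ time a short list $L \supseteq supp(x)$. I then prune $L$ to the exact support by testing each candidate against the measurements (a candidate survives iff each block it touches is on), which is correct precisely because $M$ is $k$-disjunct and costs $poly(t)$ since $|L|$ is polynomial and each test inspects only the candidate's $d$ blocks.

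The main obstacle, and the reason the cited construction is nontrivial, is instantiating a \emph{single} code that simultaneously (i) has relative distance above $1 - 1/k$ for disjunctness, (ii) is list-recoverable with input list size $k$, polynomial output list size, and a $poly(t)$-time decoding algorithm, and (iii) has rate high enough to force $t = qd = \Oh(k^2\log n)$. Plain Reed--Solomon codes give (i) and efficient decoding, but their length-at-most-alphabet constraint inflates the row count by a logarithmic factor, so I would instead use folded Reed--Solomon or Parvaresh--Vardy codes, whose list recovery is the content of Guruswami--Rudra, possibly composed with a small final $k$-disjunct matrix to carry out the pruning step; balancing the field size against the distance and output-list-size requirements is the delicate part. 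Once the code is fixed, the construction time $\tilde{\Oh}(nt)$ is routine, because each of the $n$ columns is a single codeword evaluable in near-linear time and merely writing the $nt$ entries dominates.
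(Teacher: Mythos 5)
The paper does not prove this statement at all: it is quoted verbatim as an external result of Indyk, Ngo and Rudra \cite{indyk}, so there is no in-paper proof to compare against. Judged on its own merits, your sketch is a faithful reconstruction of how that reference actually argues: Kautz--Singleton-style concatenation of an outer code with the identity inner code, relative distance above $1-1/k$ giving $k$-disjunctness via the ``private coordinate'' argument, and list recovery of the outer code replacing the $\Oh(nt)$ column scan with a $poly(t)$-time decoder followed by pruning. The one place where your plan is genuinely underspecified is the point you yourself flag: asking a \emph{single} outer code to simultaneously have relative distance above $1-1/k$, be efficiently list-recoverable with input list size $k$ and polynomial output lists, \emph{and} squeeze the product $qd$ down to $\Oh(k^2\log n)$ is a real tension (Reed--Solomon loses a $\log$ factor as you note, and Parvaresh--Vardy/folded codes with rate $\Oh(1/k)$ and $q\geq d$ push $qd$ toward $k^2\log_q^2 n$, which only balances for a careful choice of alphabet size). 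The way \cite{indyk} resolves this is essentially the fallback you mention in passing: the two roles are split between two stacked matrices --- a randomly constructed concatenated matrix achieving $k$-disjunctness at $\Oh(k^2\log n)$ rows (whence the $\tilde{\Oh}(nt)$ Monte Carlo construction time), plus an efficiently decodable list-disjunct filter built from Parvaresh--Vardy codes whose output list is then pruned by running the naive decoder of the disjunct part only on the surviving candidates. So your proposal contains all the right ingredients; to make it airtight you should commit to that two-matrix decomposition rather than hoping a single code carries both the distance and the list-recovery burden at the stated row count.
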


Equipped with this theorem, we are ready to state and prove our result on support recovery.

\begin{theorem}
There is a $\Oh(k^3 \log n) \times n $ matrix $B'$, such that for all $x \in \Sigma_k$, given $y = B'x$, we can find in time $poly(k, \log n)$ a set $S$ such that $S = supp(x)$. Moreover, $B'$ can be computed in time $\Oh(nk^3 \log n)$.
\end{theorem}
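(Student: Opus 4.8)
The plan is to reduce exact support recovery to ordinary (one-stage) combinatorial group testing, and to simulate the Boolean test outcomes from one-bit measurements by means of the Vandermonde tensoring trick of Lemma 7. Concretely, let $A$ be the $t \times n$ $k$-disjunct matrix guaranteed by the previous theorem, with $t = \Oh(k^2 \log n)$ rows, decodable in $poly(t)$ time and computable in $\tilde{\Oh}(nt)$ time. Let $V$ be a $k \times n$ Vandermonde matrix with pairwise distinct columns (for instance a real Vandermonde matrix with distinct generators, which is explicit), and set $B' = A \otimes V$ together with its negation, so that the decoder has access to $\mathrm{sign}(B'x)$ and $\mathrm{sign}(-B'x)$. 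Since $A$ has $\Oh(k^2 \log n)$ rows and $V$ has $k$ rows, $B'$ has $k \cdot t = \Oh(k^3 \log n)$ rows, as required.

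For the decoding, I would run the naive $k$-disjunct decoder on $A$, but with each Boolean test outcome obtained through the block of $k$ measurements that the corresponding row of $A$ contributes to $B'$. For the $i$-th row $r_i$ of $A$, the block $(r_i \otimes V)x$ equals $V_W x_W$ with $W = supp(r_i) \cap supp(x)$; since $|W| \leq k$ and every $k \times k$ submatrix of $V$ is invertible, this block is the zero vector exactly when $W = \emptyset$, that is, exactly when test $i$ does not intersect the support. This is precisely the content of Lemma 7. To read off whether the block vanishes from one-bit data, I compare $\mathrm{sign}(B'x)$ with $\mathrm{sign}(-B'x)$ entrywise on the block: an entry is zero iff both signs are $+1$, and the block intersects the support iff some entry is nonzero. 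Feeding these faithfully recovered Boolean outcomes into the naive algorithm and invoking Lemma 1 (which applies because $A$ is $k$-disjunct, not merely $(k,k)$-disjunct) yields $S = supp(x)$ exactly.

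The remaining accounting is routine. The decoder of $A$ runs in $poly(t) = poly(k, \log n)$ time, and each simulated test costs an additional $\Oh(k)$ work to scan its block, so the whole procedure is $poly(k, \log n)$. For the construction time, $A$ is produced in $\tilde{\Oh}(nt) = \tilde{\Oh}(nk^2 \log n)$ time, $V$ in polynomial time, and forming the Hadamard products that define the $\Oh(k^3 \log n) \times n$ matrix $B'$ costs time proportional to its number of entries, namely $\Oh(nk^3 \log n)$; this term dominates and gives the claimed bound.

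The main obstacle, and the only genuinely non-trivial point, is that real signed sparse vectors can produce false negatives: a group test that does intersect the support may still evaluate to zero through cancellation among the signed entries, and, with sign-only information, a zero measurement is indistinguishable from a positive one. Both difficulties are resolved simultaneously by the tensoring with $V$ (a zero block now forces an empty intersection, via invertibility of $k \times k$ Vandermonde submatrices) and by appending the negated copy $-B'$ (so that zeros become detectable from signs alone). The subtlety to keep in mind is that, since we want the exact support rather than a superset, $A$ must be taken $k$-disjunct and decoded through Lemma 1, in contrast to the $(k,k)$-disjunct construction used for the superset-recovery results of the previous subsection.
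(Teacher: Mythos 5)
Your proposal is correct and follows essentially the same route as the paper: take the $k$-disjunct matrix of the preceding theorem, tensor it with the Vandermonde matrix so that a block of $k$ measurements vanishes exactly when the corresponding test misses $supp(x)$, detect vanishing from the signs of $B'x$ and $-B'x$, and feed the recovered Boolean outcomes to the naive disjunct decoder. Your write-up is in fact more explicit than the paper's (which compresses all of this into ``each test corresponding to $B$ can be implemented using $k$ tests from $B'$''), but there is no substantive difference in the argument.
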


\begin{proof}

Let $B$ be the matrix guaranteed by $Theorem~5$. We combine, as before, the $k$-disjunct matrix with our Vandermorde matrix, to get $B' = B \otimes V$. Clearly, $B'$ has $\Oh(k^3 \log n)$ rows and the time to construct $B'$ equals $\tilde{\Oh}(n k^2 \log n) + \Oh(k \cdot nk^2 \log n) = \Oh (n k^3     \log n)$. Furthermore, the modified decoding algorithm finds the support of any vector $k$-sparse vector $x$, since each test corresponding to the matrix $B$, can be implemented using $k$ tests from $B'$. This proves the statement of the theorem.

\end{proof}

\subparagraph{Acknowledgements}

The author would like to thank Ely Porat for pointing to him \cite{HPA}, as well as Jelani Nelson for helpful discussions.






\bibliographystyle{alpha}
\bibliography{bibio}

\end{document}